\documentclass[11pt,a4paper]{article}

\usepackage[T1]{fontenc}
\usepackage[utf8]{inputenc}
\usepackage{microtype}
\usepackage{geometry}
\usepackage{amsmath,amsthm,mathtools,bm}
\usepackage{newtxtext,newtxmath}
\usepackage{booktabs,threeparttable,array,multirow}
\usepackage{graphicx}
\usepackage{subcaption}
\usepackage{float}
\usepackage{enumitem}
\usepackage{siunitx}
\usepackage[authoryear,round]{natbib}
\usepackage{xcolor}
\usepackage{hyperref}
\hypersetup{
  colorlinks=true,
  linkcolor=blue!55!black,
  citecolor=blue!55!black,
  urlcolor=blue!55!black,
  pdftitle={Improving Swaption Calibration in Factor HJM Stochastic Volatility Models},
  pdfauthor={Bram Brongers},
  pdfsubject={A First-Order Correction to Frozen Swap-Rate Loadings},
  pdfkeywords={factor HJM, stochastic volatility, swaption calibration, Cheyette model, frozen coefficients, first variation},
  pdfauthor={Bram Brongers}
}
\usepackage[capitalise,noabbrev]{cleveref}
\crefname{proposition}{Proposition}{Propositions}
\Crefname{proposition}{Proposition}{Propositions}
\usepackage{fancyhdr}
\usepackage{lastpage}
\graphicspath{{figures/}}
\allowdisplaybreaks
\numberwithin{equation}{section}
\setlist[itemize]{topsep=0.25em,itemsep=0.15em,parsep=0pt}
\setlist[enumerate]{topsep=0.25em,itemsep=0.15em,parsep=0pt}

\newtheorem{theorem}{Theorem}[section]
\newtheorem{proposition}{Proposition}[section]

\theoremstyle{definition}

\theoremstyle{remark}
\newtheorem{remark}[theorem]{Remark}

\newcommand{\E}{\mathbb{E}}
\newcommand{\Q}{\mathbb{Q}}
\newcommand{\R}{\mathbb{R}}
\newcommand{\dd}{\mathrm{d}}
\newcommand{\ii}{\mathrm{i}}
\newcommand{\cA}{\mathcal{A}}
\newcommand{\cD}{\mathcal{D}}
\newcommand{\cL}{\mathcal{L}}
\newcommand{\cV}{\mathcal{V}}

\newcommand{\exact}{\mathrm{ex}}
\newcommand{\frozen}{\mathrm{fr}}
\newcommand{\linear}{\mathrm{lin}}
\newcommand{\QuoteResidualThresholdBp}{0.05}
\newcommand{\FullModelRepricingMaximumSeBp}{0.0070}

\newcommand{\ModerateFrozenHeldoutBp}{0.3169}
\newcommand{\ModerateCorrectedHeldoutBp}{0.0118}
\newcommand{\ModerateHoldoutImprovementFactor}{26.8}
\newcommand{\NegativeFrozenHeldoutBp}{0.2622}
\newcommand{\NegativeCorrectedHeldoutBp}{0.0323}
\newcommand{\NegativeHoldoutImprovementFactor}{8.1}
\newcommand{\TwoBucketFrozenHeldoutBp}{0.1102}
\newcommand{\TwoBucketCorrectedHeldoutBp}{0.0658}
\newcommand{\TwoBucketHoldoutImprovementFactor}{1.7}
\newcommand{\ModerateMaximumQuoteSeBp}{0.0061}
\newcommand{\NegativeMaximumQuoteSeBp}{0.0063}
\newcommand{\TwoBucketMaximumQuoteSeBp}{0.0040}
\newcommand{\ModerateWeakestLiftPercent}{20.4}

\newcommand{\QmcReplicateCount}{8}
\newcommand{\MarketReplicateCount}{20}
\newcommand{\MarketQuoteNoiseBp}{0.25}
\newcommand{\StateStressResidualBp}{0.0981}
\newcommand{\BroadStressResidualBp}{0.0738}
\newcommand{\StateStressPdeFitBp}{0.1155}
\newcommand{\StateStressPdeHeldoutBp}{0.1043}
\newcommand{\BroadStressPdeFitBp}{0.0882}
\newcommand{\BroadStressPdeHeldoutBp}{0.0809}
\newcommand{\BroadEoneFrozenHeldoutBp}{1.7466}
\newcommand{\BroadEoneCorrectedHeldoutBp}{7.5129}
\newcommand{\BroadEoneRefinementBp}{2844.85}
\newcommand{\BroadSpectralFinalQuadratureBp}{0.000351}
\newcommand{\GammaSlopeMinimum}{1.974}
\newcommand{\GammaSlopeMaximum}{2.071}
\newcommand{\BoundaryLowerU}{0.55}
\newcommand{\BoundaryUpperU}{0.60}
\newcommand{\BoundaryLowerResidualBp}{0.0486}
\newcommand{\BoundaryUpperResidualBp}{0.0529}
\newcommand{\BoundaryLowerResidualSeBp}{0.0051}
\newcommand{\BoundaryUpperResidualSeBp}{0.0055}
\newcommand{\BoundaryEoneLastU}{0.2}
\newcommand{\BoundaryPdeFirstU}{0.4}

\newcommand{\NonflatFrozenHeldoutBp}{0.3174}
\newcommand{\NonflatCorrectedHeldoutBp}{0.0104}
\newcommand{\NonflatHoldoutImprovementFactor}{30.7}
\newcommand{\NonflatTrueRho}{-0.268}
\newcommand{\NonflatFrozenRho}{-0.123}
\newcommand{\NonflatCorrectedRho}{-0.269}
\newcommand{\NonflatMaximumResidualBp}{0.0141}

\newcommand{\NonflatMaximumQuoteSeBp}{0.0067}

\newcommand{\ThreeBySevenAtmEquivalentBp}{0.0028}

\newcommand{\TenByTenAtmEquivalentBp}{0.0103}

\title{
\textbf{Improving Swaption Calibration in Factor HJM Stochastic Volatility Models}\\[0.35em]
\large A First-Order Correction to Frozen Swap-Rate Loadings
}
\author{
Bram Brongers\\
\href{mailto:brambrongers98@gmail.com}{\texttt{brambrongers98@gmail.com}}
}
\date{27 August 2026}

\begin{document}
\maketitle

\begin{abstract}
The factor HJM stochastic volatility model of \citet{SeppRakhmonovFHJM2025}
obtains tractable swaption pricing by freezing the nonlinear swap-rate loading
along a deterministic expected-state path. This removes the dependence of
conditional swap-rate variance on the current yield-curve state. We introduce
a first-order Taylor correction to the loading that adds no calibration
parameters. Conditional on retaining the frozen annuity-measure drift, we show that the
first variation of the swap-rate transform is affine in the centered rate
states and reduces to one-dimensional equations in volatility. For
quadratic-drift lognormal stochastic volatility, these equations yield a
finite-dimensional ODE representation and a direct log-volatility formulation. Calibrations to independently generated nonlinear-model prices show
substantially lower stochastic volatility parameter bias and held-out pricing
error, with only modest changes in local calibration identifiability.

\end{abstract}

\noindent\textbf{Keywords:} factor HJM; Cheyette model; stochastic volatility; swaption calibration; frozen coefficients; first variation; Fourier transform.\\
\textbf{JEL classification:} G13, C63.

\section{Introduction}\label{sec:introduction}

The Heath--Jarrow--Morton framework models the evolution of the full forward-rate curve subject to a no-arbitrage drift restriction \citep{HeathJarrowMorton1992}. Finite-dimensional realizations are attractive in applications because they combine the consistency of HJM with a small Markov state. The Cheyette representation is a standard example \citep{Cheyette1992,AndersenPiterbarg2010}. More recently, the factor HJM construction of \citet{LyashenkoGoncharov2023} starts from an interpretable factor representation under the statistical measure and augments it by auxiliary factors so that the risk-neutral curve dynamics remain arbitrage-free and fit the initial term structure by construction. Exponential-polynomial bases, including Nelson--Siegel-type specifications, lead to tractable finite-dimensional models \citep{NelsonSiegel1987,ChristensenDieboldRudebusch2009}.

Stochastic volatility is relevant when a term-structure model is required to reproduce the skew and curvature of interest-rate option prices. \citet{SeppRakhmonovFHJM2025} augment the factor HJM model with a scalar stochastic volatility driver correlated with the curve factors. They derive annuity-measure swap-rate dynamics, Fourier valuation formulas, an exact transform for a CIR volatility specification, and a semi-analytic approximation for quadratic-drift lognormal volatility. Their empirical results show that the resulting Nelson--Siegel factor model can fit swaption and SOFR-option implied volatilities across expiries and tenors. A public reference implementation accompanies their paper \citep{StochVolModelsRepository}.

Bond prices, annuities, and swap rates are explicit functions of the finite-dimensional state. The obstacle is that the diffusion loading of the swap rate under its annuity measure is nonlinear in those states. In the notation of \citet{SeppRakhmonovFHJM2025},
\begin{equation}
  \dd S_t
  = \sigma_t\,\Lambda(t,X_t,Y_t)^{\top}\dd W_t^A,
  \qquad
  \Lambda(t,X,Y)=C(t)^{\top}\nabla_X S(t,X,Y).
  \label{eq:intro_exact_loading}
\end{equation}
The annuity-measure drift contains a second nonlinear coefficient,
\begin{equation}
  L_X(t,X,Y)=\nabla_X\log A(t,X,Y).
  \label{eq:intro_annuity_loading}
\end{equation}
To obtain a low-dimensional transform, the original construction evaluates both
coefficients along deterministic moment-closure paths
$(\bar X_t,\bar Y_t)$ approximating the annuity-measure expected states,
\begin{equation}
  (X_t,Y_t)\longmapsto(\bar X_t,\bar Y_t).
  \label{eq:intro_freeze}
\end{equation}
In particular, the swap-rate loading becomes the deterministic function
\begin{equation}
  \Lambda_0(t)=C(t)^{\top}\nabla_X S(t,\bar X_t,\bar Y_t).
  \label{eq:intro_constant_loading}
\end{equation}
This is closely related to the coefficient-freezing approximations commonly used to derive tractable market-model pricing formulas \citep{BrigoMercurio2007}.

The approximation in \cref{eq:intro_constant_loading} is analytically convenient, but it removes the dependence of the swap-rate diffusion loading, and hence of the conditional swap-rate variance, on the current yield-curve state. Calibration may then compensate by shifting the stochastic volatility parameters. These parameters can reproduce some of the resulting effects on the swaption smile even though the underlying mechanism is different. A good fit to option prices therefore does not guarantee recovery of the dynamics that generated them. Synthetic experiments make the difference observable because the data-generating parameters are known.

We replace \cref{eq:intro_constant_loading} by the first-order Taylor approximation
\begin{equation}
  \Lambda_{\linear}(t,Z)
  =\Lambda_0(t)+J(t)(Z-\bar Z_t),
  \qquad Z=(X,Y),
  \label{eq:intro_linear_loading}
\end{equation}
where \(J(t)\) is the Jacobian of \(C^{\top}\nabla_XS\) along the same deterministic centering path. This adds no free parameters, since \(J(t)\) is determined by the initial curve, the swap cash-flow dates, the factor basis, and the existing model parameters.

A direct substitution of \cref{eq:intro_linear_loading} into the swap-rate variance produces quadratic state terms. The complete state-linear model is therefore not affine in the usual sense of \citet{DuffieFilipovicSchachermayer2003}. Nevertheless, the first-order correction to the frozen transform remains tractable and can be computed without solving the full state-dimensional PDE. Under the same frozen annuity-measure drift used by the original approximation, the first variation of the transform is affine in the centered rate states, with coefficient functions governed by one-dimensional volatility equations. For quadratic-drift lognormal volatility, these equations can be coupled to the original first-order affine expansion, denoted E1 below, and projected onto a small polynomial basis.

\subsection{Related work}
Approximate swaption valuation in term-structure models has long relied on
low-dimensional representations of coupon-bond or swap-rate payoffs.
\citet{SingletonUmantsev2002} recover Fourier-based prices for options on
coupon-bearing instruments in affine term-structure models by approximating
the exercise probabilities. More closely related to the present setting,
\citet{SchragerPelsser2006} derive approximate swap-rate dynamics whose
volatility is affine in the term-structure factors, so that affine transform
methods remain available. The approximation presented in this paper has a similar objective,
but starts from the nonlinear swap-rate loading in the factor HJM
stochastic volatility model. The complete state-linear loading does not
preserve affine diffusion structure; instead, tractability is recovered at the level of its first variation.

Coefficient freezing followed by a correction also appears in market-model
approximations. \citet{Henrard2010} combines an initial freeze with a
corrector approximation to obtain explicit, strike-dependent swaption prices
in a local-volatility LMM. Perturbative methods have likewise been used to
introduce smile dynamics while retaining fast pricing.
\citet{AhdidaAlfonsiPalidda2017} expand caplet and swaption prices around
the linear Gaussian model in an affine stochastic-covariance extension,
while \citet{GairatGorovoyShcherbakov2026} derive perturbative
local-volatility approximations for Cheyette-type models.
In contrast to these approaches, we study specifically the error induced by
the expected-state freeze in the FHJM-SV swap-rate loading and derive the resulting first-variation transform.

\subsection{Contributions}
The contributions of this paper are as follows.

\begin{enumerate}[label=(\roman*)]

\item We study the calibration effect of the frozen-loading approximation using synthetic swaption prices generated independently from the full nonlinear model. By comparing frozen and state-linear specifications, we quantify the effects of loading linearization, freezing of the annuity-measure drift, first-order truncation, and the E1 approximation on calibration accuracy and parameter recovery.

\item Conditional on the prescribed centering path and frozen annuity drift, we derive the first variation of the swap-rate transform within the auxiliary state-linear loading family. In the scalar Cheyette model, the resulting correction is characterized by a system of one-dimensional parabolic equations in the volatility state. A matrix formulation shows that the same reduction extends to multiple curve factors when the frozen rate-state dynamics have affine drift.

\item For quadratic-drift lognormal stochastic volatility, we obtain a finite-dimensional implementation by combining the first-variation equations with the E1 approximation of \citet{SeppRakhmonovLogSV2023,SeppRakhmonovFHJM2025}. We also develop a Chebyshev collocation scheme in log volatility that does not rely on the finite polynomial representation. The derivation and numerical implementation are checked independently using symbolic identities, an analytically tractable Gaussian limit, finite-difference solutions of the one-dimensional PDEs, and pathwise randomized-QMC estimators.

\item We assess the first-order correction against the nonlinear model. In the parameter regimes used for the main calibration experiments, the correction substantially reduces approximation-induced parameter bias and held-out pricing errors. The stress tests identify cases in which either the finite E1 representation or the first-order approximation itself no longer meets the accuracy criteria.

\end{enumerate}

The numerical results indicate that the state-linear correction primarily reduces approximation-induced calibration bias, but does not materially change conditioning of the calibration problem. In the Moderate regime, the smallest singular value of the calibration Jacobian increases modestly, while the associated weak parameter direction is essentially unchanged; under Negative skew, the singular-value spectrum is largely unaffected. Within the parameter regimes for which the first-order approximation is accurate, the correction nevertheless produces substantially better parameter recovery and lower held-out pricing errors without introducing additional calibration parameters.

The remainder of the paper is organized as follows. \Cref{sec:model} reviews the factor HJM swaption setting and states the scalar lognormal-SV specialization used for the proof. \Cref{sec:first_variation} derives the first-variation equations and the finite ODE approximation. \Cref{sec:numerics_design} describes the calibration and transform-validation experiments. \Cref{sec:numerical_results} reports the results, including the stress tests. \Cref{sec:discussion} discusses interpretation and implementation limits, and \cref{sec:conclusion} concludes. Additional algebra and numerical details are collected in the appendices.

\section{Factor HJM swaptions and the frozen-coefficient approximation}\label{sec:model}

\subsection{Finite-dimensional factor HJM dynamics}

We begin with the notation needed for the swaption problem. Let \(X_t\in\R^d\) denote the principal yield-curve factors and \(Y_t\in\R^m\) the auxiliary factors required by the risk-neutral HJM drift. With an exponential-polynomial basis, the instantaneous forward curve can be written as
\begin{equation}
  f_t(\tau)=B(\tau)X_t+\widetilde B(\tau)Y_t+\widehat f_t(\tau),
  \label{eq:fwd_factor_rep}
\end{equation}
where \(\tau\) is time to maturity and \(\widehat f_t\) reproduces the observed initial curve. Under the risk-neutral measure \(\Q\), a scalar stochastic volatility extension has dynamics of the form
\begin{align}
  \dd X_t &= D X_t\,\dd t+\sigma_t C(t)\,\dd W_t^{(0)},
  \label{eq:general_X_Q}\\
  \dd Y_t &= \widetilde D Y_t\,\dd t+\sigma_t^2\widetilde\Omega(t)\,\dd t,
  \label{eq:general_Y_Q}\\
  \dd\sigma_t
  &=b_\sigma(t,\sigma_t)\,\dd t
    +\varsigma(t,\sigma_t)
      \left(\bm\beta(t)^{\top}\dd W_t^{(0)}+\beta_0(t)\dd B_t\right),
  \label{eq:general_sigma_Q}
\end{align}
where \(B\) is independent of \(W^{(0)}\), and \(\widetilde\Omega\) is fixed by the HJM no-arbitrage restriction. See \citet{LyashenkoGoncharov2023,SeppRakhmonovFHJM2025} for the construction and admissible bases.

The zero-coupon bond price is exponential-affine in \((X,Y)\):
\begin{equation}
  P(t,t+\tau)
  =\frac{P(0,t+\tau)}{P(0,t)}
   \exp\!\left[-B_P(\tau)X_t-\widetilde B_P(\tau)Y_t\right],
  \label{eq:bond_price}
\end{equation}
where
\begin{equation}
  B_P(\tau)=\int_0^\tau B(u)\,\dd u,
  \qquad
  \widetilde B_P(\tau)=\int_0^\tau\widetilde B(u)\,\dd u.
\end{equation}
For payment dates \(T_0<T_1<\cdots<T_N\) and accrual fractions \(\delta_n=T_{n+1}-T_n\), define
\begin{align}
  A(t,X,Y)&=\sum_{n=0}^{N-1}\delta_n P(t,T_{n+1};X,Y),
  \label{eq:annuity}\\
  S(t,X,Y)&=\frac{P(t,T_0;X,Y)-P(t,T_N;X,Y)}{A(t,X,Y)}.
  \label{eq:swap_rate}
\end{align}
All derivatives of \(A\) and \(S\) required below are therefore available analytically or by algorithmic differentiation.

\subsection{Annuity-measure dynamics and nonlinear coefficients}

Let \(\Q^A\) be the measure associated with the swap annuity. Define
\begin{equation}
  L_X(t,X,Y)=\nabla_X\log A(t,X,Y).
  \label{eq:Lx_definition}
\end{equation}
A Girsanov change of measure gives the state dynamics under \(\Q^A\):
\begin{align}
  \dd X_t
  &=\left[D X_t+\sigma_t^2 C C^{\top}L_X(t,X_t,Y_t)\right]\dd t
    +\sigma_t C\,\dd W_t^A,
  \label{eq:X_annuity_general}\\
  \dd Y_t
  &=\left[\widetilde D Y_t+\sigma_t^2\widetilde\Omega\right]\dd t,
  \label{eq:Y_annuity_general}
\end{align}
with the corresponding drift adjustment in the volatility process. The swap rate is a local martingale under \(\Q^A\), with diffusion
\begin{equation}
  \dd S_t
  =\sigma_t\,\Lambda(t,X_t,Y_t)^{\top}\dd W_t^A,
  \qquad
  \Lambda(t,X,Y)=C(t)^{\top}\nabla_XS(t,X,Y).
  \label{eq:swap_annuity_exact}
\end{equation}
Both \(L_X\) and \(\Lambda\) are nonlinear functions of the state because they contain ratios and weighted sums of exponential-affine bond prices.

For analytical tractability, \citet{SeppRakhmonovFHJM2025} replace these coefficients by their values along deterministic expected-state paths. In particular,
\begin{align}
  L_0(t)&=L_X(t,\bar X_t,\bar Y_t),
  \label{eq:L0}\\
  \Lambda_0(t)&=C(t)^{\top}\nabla_XS(t,\bar X_t,\bar Y_t),
  \label{eq:Lambda0}
\end{align}
where \((\bar X,\bar Y)\) are obtained from a deterministic moment closure. The projected swap rate then satisfies
\begin{equation}
  \dd s_t=\sigma_t\Lambda_0(t)^{\top}\dd W_t^A.
  \label{eq:frozen_projected_swap}
\end{equation}
This reduction concentrates all stochastic non-Gaussianity in the volatility driver and leads to a low-dimensional transform. The notation ``frozen loading'' below refers specifically to \cref{eq:Lambda0}; the annuity-measure drift is also frozen unless stated otherwise.

\subsection{Quadratic-drift lognormal volatility}

The lognormal stochastic volatility specification considered in the original model is
\begin{equation}
  \dd\sigma_t
  =(\kappa_1+\kappa_2\sigma_t)(\theta-\sigma_t)\,\dd t
   +\sigma_t\left(\bm\beta(t)^{\top}\dd W_t^{(0)}+\beta_0(t)\dd B_t\right).
  \label{eq:logsv_Q}
\end{equation}
The quadratic drift is useful for positivity, mean reversion, and stability under changes of numeraire; see \citet{KarasinskiSepp2012,CarrWillems2020,SeppRakhmonovLogSV2023}. Under \(\Q^A\), freezing \(L_X\) preserves the quadratic-lognormal form, with a deterministic adjustment of the quadratic drift.

The proof and numerical validation will use the scalar Cheyette specialization. Let \(\alpha(t)\) be the deterministic scalar factor loading and \(\lambda>0\) the rate mean reversion. Under the frozen annuity coefficient \(L_0(t)\),
\begin{align}
  \dd X_t
  &=\left[Y_t-\lambda X_t+\alpha_t^2\sigma_t^2L_0(t)\right]\dd t
    +\alpha_t\sigma_t\,\dd W_t,
  \label{eq:scalar_X}\\
  \dd Y_t
  &=\left[\alpha_t^2\sigma_t^2-2\lambda Y_t\right]\dd t,
  \label{eq:scalar_Y}\\
  \dd\sigma_t
  &=\left[(\kappa_1+\kappa_2\sigma_t)(\theta-\sigma_t)
          +\alpha_t\beta_t\sigma_t^2L_0(t)\right]\dd t
    +\sigma_t\left(\beta_t\dd W_t+\epsilon_t\dd B_t\right),
  \label{eq:scalar_sigma}
\end{align}
where \(W\) and \(B\) are independent. We set
\begin{equation}
  \nu_t^2=\beta_t^2+\epsilon_t^2.
  \label{eq:nu_definition}
\end{equation}
The scalar swap-rate loading is
\begin{equation}
  a(t,x,y)=\alpha(t)S_x(t,x,y).
  \label{eq:scalar_loading}
\end{equation}
Along the expected-state path, define
\begin{equation}
  a_0(t)=\alpha_t S_x(t,\bar x_t,\bar y_t),\qquad
  a_x(t)=\alpha_t S_{xx}(t,\bar x_t,\bar y_t),\qquad
  a_y(t)=\alpha_t S_{xy}(t,\bar x_t,\bar y_t).
  \label{eq:loading_derivatives}
\end{equation}
The proposed state-linear loading is
\begin{equation}
  a_{\linear}(t,x,y)
  =a_0(t)+a_x(t)(x-\bar x_t)+a_y(t)(y-\bar y_t).
  \label{eq:state_linear_scalar}
\end{equation}
No new parameter appears in \cref{eq:state_linear_scalar}.

\subsection{Why the complete state-linear model is not affine}

Let \(\xi=x-\bar x_t\) and \(\eta=y-\bar y_t\). The instantaneous variance of the projected swap rate under \cref{eq:state_linear_scalar} is
\begin{equation}
  \sigma_t^2\left[a_0+a_x\xi_t+a_y\eta_t\right]^2.
  \label{eq:quadratic_swap_variance}
\end{equation}
The square contains terms \(\sigma^2\xi^2\), \(\sigma^2\xi\eta\), and \(\sigma^2\eta^2\). Cross-covariances between the swap rate, \(X\), and \(\sigma\) also become state dependent. Consequently, the full state-linear model does not possess the exponential-affine transform of an affine diffusion. This does not preclude a controlled first-order correction around the frozen loading, which is derived next.

\section{First variation of the swap-rate transform}\label{sec:first_variation}

\subsection{Centered dynamics and the auxiliary scaling parameter}

Define centered states and shifted volatility by
\begin{equation}
  \xi_t=X_t-\bar x_t,\qquad
  \eta_t=Y_t-\bar y_t,\qquad
  v_t=\sigma_t-\theta,
  \qquad \sigma_t=\theta+v_t.
  \label{eq:centered_states}
\end{equation}
Introduce an auxiliary scalar \(\gamma\) multiplying the state-dependent part of the loading,
\begin{equation}
  a_\gamma(t,\xi,\eta)
  =a_0(t)+\gamma h(t,\xi,\eta),
  \qquad
  h(t,\xi,\eta)=a_x(t)\xi+a_y(t)\eta.
  \label{eq:gamma_loading}
\end{equation}
The frozen model is \(\gamma=0\), and \(\gamma=1\) gives the complete state-linear loading. The parameter \(\gamma\) organizes the perturbation expansion and is not calibrated.

Write
\begin{equation}
  m(t)=\alpha_t\beta_tL_0(t).
\end{equation}
The volatility drift in \(v\) is
\begin{equation}
  b_v(t,v)=k_0(t)-k_1(t)v-k_2(t)v^2,
  \label{eq:bv}
\end{equation}
with
\begin{equation}
  k_0=m\theta^2,
  \qquad
  k_1=\kappa_1+\kappa_2\theta-2m\theta,
  \qquad
  k_2=\kappa_2-m.
  \label{eq:k_coeffs}
\end{equation}
Let \(m_\sigma(t)\) denote the deterministic mean-volatility closure used to
construct the centering path. The implementation evolves this proxy for
\(\E[\sigma_t]\) and uses its square,
\begin{equation}
  m_\sigma(t)^2 \approx \bigl(\E[\sigma_t]\bigr)^2,
  \label{eq:mean_volatility_closure}
\end{equation}
in the expected-state equations; it is not an approximation to
\(\E[\sigma_t^2]\). The centered rate-state dynamics are
\begin{align}
  \dd\xi_t
  &=\left[\eta_t-\lambda\xi_t+r_x(t,v_t)\right]\dd t
    +\alpha_t\sigma_t\dd W_t,
  \label{eq:centered_xi}\\
  \dd\eta_t
  &=\left[-2\lambda\eta_t+r_y(t,v_t)\right]\dd t,
  \label{eq:centered_eta}
\end{align}
where
\begin{equation}
  r_y(t,v)=\alpha_t^2\left[(\theta+v)^2-m_\sigma(t)^2\right],
  \qquad
  r_x(t,v)=L_0(t)r_y(t,v).
  \label{eq:rxy}
\end{equation}
The derivation remains valid for any deterministic centering path if \(r_x,r_y\) are replaced by the corresponding deterministic residuals.

The projected swap rate satisfies
\begin{equation}
  \dd S_t=\sigma_t a_\gamma(t,\xi_t,\eta_t)\dd W_t.
  \label{eq:S_gamma}
\end{equation}
Because the annuity-measure coefficient is still frozen, the state process \((\xi,\eta,v)\) is independent of \(\gamma\). This triangular structure is the key to the first-order reduction.

\subsection{Fourier-transformed backward equation}

Let \(\phi\in\mathbb C\) lie in a strip on which the relevant exponential
moment exists.  We use the bilateral-Laplace convention
\begin{equation}
  f_\gamma(\tau,\xi,\eta,v;\phi)
  =\E^A\!\left[
       e^{-\phi(S_T-S_t)}
       \mid \xi_t=\xi,\eta_t=\eta,v_t=v
     \right],
  \qquad \tau=T-t,
  \label{eq:transform_definition}
\end{equation}
where \(t=T-\tau\).  On the imaginary axis this is the characteristic
function under \(\phi=-\ii u\); displaced contours provide the exponential
damping used in Fourier option pricing.  At expiry no increment remains, so
\begin{equation}
  f_\gamma(0,\xi,\eta,v;\phi)=1.
  \label{eq:transform_terminal}
\end{equation}

It is useful to state explicitly how the transformed operator arises.  Let
\[
  U_\gamma(t,s,\xi,\eta,v;\phi)
  =\E^A\!\left[
       e^{-\phi S_T}
       \mid S_t=s,\xi_t=\xi,\eta_t=\eta,v_t=v
     \right].
\]
Because the coefficients of the joint dynamics do not depend on the level of
the swap rate,
\[
  U_\gamma(t,s,\xi,\eta,v;\phi)
  =e^{-\phi s}F_\gamma(t,\xi,\eta,v;\phi),
  \qquad
  F_\gamma(t,\cdot;\phi)=f_\gamma(T-t,\cdot;\phi).
\]
Substituting this factorization into the backward Kolmogorov equation for the
joint process \((S,\xi,\eta,v)\), and dividing by \(e^{-\phi s}\), gives
\begin{equation}
  \partial_tF_\gamma
  +\cL_{\gamma,\phi,t}F_\gamma=0,
  \qquad
  F_\gamma(T,\xi,\eta,v;\phi)=1.
  \label{eq:calendar_transform_equation}
\end{equation}
Equivalently, in time to expiry,
\begin{equation}
  \partial_\tau f_\gamma
  =\cL_{\gamma,\phi,T-\tau}f_\gamma,
  \qquad f_\gamma(0,\xi,\eta,v;\phi)=1.
  \label{eq:tau_transform_equation}
\end{equation}
Thus every deterministic input to the backward equation is read in reversed
calendar time:
\[
  \overleftarrow c(\tau):=c(T-\tau).
\]
We suppress the arrow notation in what follows, with all deterministic coefficients understood to be evaluated at \(T-\tau\). Appendix~\ref{app:phase1_time} gives the corresponding time-change derivation and a numerical consistency check for nonconstant coefficient paths.

Writing \(\sigma=\theta+v\), \(h=a_x\xi+a_y\eta\), and
\(\nu^2=\beta^2+\epsilon^2\), the transformed backward operator is
\begin{equation}
\boxed{
\begin{aligned}
  \cL_{\gamma,\phi}f
  ={}&b_\xi f_\xi+b_\eta f_\eta+b_v f_v\\
  &+\frac12\sigma^2\left(
       \alpha^2f_{\xi\xi}+\nu^2f_{vv}+2\alpha\beta f_{\xi v}
     \right)\\
  &-\phi\sigma^2(a_0+\gamma h)
       \left(\alpha f_\xi+\beta f_v\right)
   +\frac12\phi^2\sigma^2(a_0+\gamma h)^2 f,
\end{aligned}}
  \label{eq:full_generator}
\end{equation}
where
\begin{equation}
  b_\xi=\eta-\lambda\xi+r_x,
  \qquad
  b_\eta=-2\lambda\eta+r_y.
\end{equation}
The first line contains the state drifts.  The second line is the diffusion
operator for \((\xi,v)\), including their covariance
\(\dd\langle\xi,v\rangle_t=\alpha\beta\sigma^2\dd t\).  Conjugation by
\(e^{-\phi s}\) converts the \(S\)--\(\xi\) and \(S\)--\(v\)
cross-derivatives into the terms proportional to \(-\phi\); their sign follows
from \(\partial_s e^{-\phi s}=-\phi e^{-\phi s}\).  The final term is
\(\tfrac12\phi^2\dd\langle S\rangle_t/\dd t\).

\pagebreak[2]
Since \(a_\gamma=a_0+\gamma h\), the operator has the exact polynomial
decomposition
\begin{equation}
  \cL_{\gamma,\phi}
  =\cL_{0,\phi}+\gamma\cL_{1,\phi}+\gamma^2\cL_{2,\phi},
  \label{eq:generator_expansion}
\end{equation}
where
\begin{align}
  \cL_{1,\phi}f
  &=-\phi\sigma^2h(\alpha f_\xi+\beta f_v)
    +\phi^2\sigma^2a_0h f,
  \label{eq:L1}\\
  \cL_{2,\phi}f
  &=\frac12\phi^2\sigma^2h^2 f.
  \label{eq:L2}
\end{align}
This decomposition is exact at the generator level.  The first variation at
\(\gamma=0\) depends on \(\cL_{1,\phi}\); the quadratic term
\(\cL_{2,\phi}\) enters only from second order onward.  The approximation
introduced below is the truncation of the solution in \(\gamma\), not the
operator identity above.

\subsection{Exact affine-state closure of the first variation}

Under the second-order moment and transform-domain conditions of Theorem~\ref{thm:transform_differentiability}, Taylor's theorem gives
\begin{equation}
  f_\gamma=f_0+\gamma f_1+O(\gamma^2).
  \label{eq:f_expansion}
\end{equation}
Matching powers of \(\gamma\) gives
\begin{align}
  \partial_\tau f_0&=\cL_{0,\phi}f_0,
  &f_0(0)&=1,
  \label{eq:f0_pde}\\
  \partial_\tau f_1&=\cL_{0,\phi}f_1+\cL_{1,\phi}f_0,
  &f_1(0)&=0.
  \label{eq:f1_pde}
\end{align}
At \(\gamma=0\), the terminal condition and all coefficients relevant to the swap transform are independent of \((\xi,\eta)\). Hence \(f_0=f_0(\tau,v;\phi)\). Define the one-dimensional frozen-loading operator
\begin{equation}
\boxed{
  \cA_\phi g
  =\frac12\nu^2\sigma^2g_{vv}
   +\left(b_v-\phi\beta a_0\sigma^2\right)g_v
   +\frac12\phi^2a_0^2\sigma^2g.}
  \label{eq:Aphi}
\end{equation}
Then \(\partial_\tau f_0=\cA_\phi f_0\).

\begin{proposition}[First variation of the swap-rate transform]\label{prop:first_variation}
Assume the strong-solution, moment, and transform-domain conditions in Appendix~\ref{app:phase1_differentiability}, 
and uniqueness in the relevant mild-solution growth class. Then the first variation
\(f_1=\left.\partial_\gamma f_\gamma\right|_{\gamma=0}\) exists as a mild
Feynman--Kac solution and is affine in the centered rate states.  If the
additional regularity or hypoellipticity conditions stated there hold, it is
also a classical solution:
\begin{equation}
  f_1(\tau,\xi,\eta,v)
  =g_0(\tau,v)+g_x(\tau,v)\xi+g_y(\tau,v)\eta.
  \label{eq:f1_affine}
\end{equation}
The coefficient functions solve
\begin{equation}
\boxed{
\begin{aligned}
  \partial_\tau g_x
  ={}&(\cA_\phi-\lambda)g_x
      +a_x\sigma^2\left(\phi^2a_0f_0-\phi\beta\partial_vf_0\right),\\[2pt]
  \partial_\tau g_y
  ={}&(\cA_\phi-2\lambda)g_y+g_x
      +a_y\sigma^2\left(\phi^2a_0f_0-\phi\beta\partial_vf_0\right),\\[2pt]
  \partial_\tau g_0
  ={}&\cA_\phi g_0
      +(r_x-\phi\alpha a_0\sigma^2)g_x+r_yg_y
      +\alpha\beta\sigma^2\partial_vg_x,
\end{aligned}}
  \label{eq:g_system}
\end{equation}
with \(g_0(0,v)=g_x(0,v)=g_y(0,v)=0\).
\end{proposition}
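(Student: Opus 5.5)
The proof has two parts. First, $f_1$ exists as a Feynman--Kac representation. Second, an affine ansatz solves the first-variation equation \cref{eq:f1_pde}, and uniqueness identifies the two.

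\textbf{Step 1: existence and representation.} Because the annuity drift is frozen, the state process $(\xi,\eta,v)$ does not depend on $\gamma$; only the swap increment
\[
S_T-S_t=\int_t^T\sigma_s\bigl(a_0+\gamma h_s\bigr)\,\dd W_s
\]
does. We differentiate $f_\gamma=\E^A[e^{-\phi(S_T-S_t)}]$ under the expectation at $\gamma=0$. The moment and transform-domain conditions of Appendix~\ref{app:phase1_differentiability} supply a uniform integrable bound for $\gamma$ near $0$ on the strip in $\phi$. This gives
\[
f_1=-\phi\,\E^A\!\Bigl[e^{-\phi I_0}\int_t^T\sigma_s h_s\,\dd W_s\Bigr],
\qquad I_0=\int_t^T\sigma_s a_0\,\dd W_s .
\]
To turn this into a Duhamel formula, change measure with the complex exponential martingale $e^{-\phi I_0-\frac12\phi^2\int\sigma^2a_0^2\,\dd s}$. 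This adds the drift $-\phi\sigma^2a_0(\alpha,\beta)$ to $(\xi,v)$ and turns the stochastic integral into $-\phi\int\sigma^2a_0h\,\dd s$. Alternatively, apply It\^o's product rule to $f_0(T-s,v_s)\,e^{-\phi(I_0)_{t,s}}$ times the martingale $\int\sigma h\,\dd W$. Either route yields the mild form
\[
f_1(\tau)=\int_0^\tau P^{0}_{\tau-s}\bigl[\cL_{1,\phi}f_0(s)\bigr]\,\dd s ,
\]
where $P^0$ is the Feynman--Kac semigroup of $\cL_{0,\phi}$. I expect this step to be the main obstacle. The hard part is justifying interchange of derivative and expectation, and integrability of the complex exponential, for lognormal $\sigma$ with $\sigma^2h$ weights. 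I would use Hölder's inequality with the strip condition, which bounds $\E|e^{-\phi I_\gamma}|^{p}$ for some $p>1$, together with polynomial moments of $\xi,\eta$ and of $\sigma$ under the tilted measure.

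\textbf{Step 2: closure of the affine class.} The source term is
\[
\cL_{1,\phi}f_0=\sigma^2 h\bigl(\phi^2a_0f_0-\phi\beta\,\partial_vf_0\bigr).
\]
Since $f_0$ depends only on $v$, this source is affine in $(\xi,\eta)$ with coefficients $a_x\sigma^2(\cdot)$ and $a_y\sigma^2(\cdot)$. Next apply $\cL_{0,\phi}$ to $g_0+g_x\xi+g_y\eta$:
\begin{itemize}
\item The drift terms $b_\xi g_x+b_\eta g_y$ contribute $(\eta-\lambda\xi+r_x)g_x+(-2\lambda\eta+r_y)g_y$.
\item $f_{\xi\xi}$ vanishes.
\item The mixed term gives $\alpha\beta\sigma^2\partial_vg_x$.
\item The $\phi$-cross term gives $-\phi\sigma^2a_0\bigl(\alpha g_x+\beta\partial_v g\bigr)$.
\item The remaining $v$-terms assemble into $\cA_\phi$ acting on each coefficient.
\end{itemize}
Collecting powers of $\xi$, $\eta$ and $1$ reproduces \eqref{eq:g_system}. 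The $\eta$ term $g_x$ enters the $g_y$ equation, and $r_x$, $r_y$, $-\phi\alpha a_0\sigma^2g_x$ and $\alpha\beta\sigma^2\partial_vg_x$ enter $g_0$, because $\xi$ and $\eta$ carry no diffusion of their own.

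The system is triangular: first $g_x$, then $g_y$, then $g_0$. Each is a one-dimensional linear parabolic problem in $v$ with zero initial data. So each can be written in mild form via the Feynman--Kac semigroup of $\cA_\phi-c$, with $c\in\{\lambda,2\lambda,0\}$, and the sources inherit the growth bounds of $f_0$ and $\partial_vf_0$.

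\textbf{Step 3: identification and regularity.} The affine function built from these mild $g$'s is a mild solution of \cref{eq:f1_pde} in the same growth class as the representation from Step 1. The assumed uniqueness in that class then gives \eqref{eq:f1_affine}. Note that $\partial_vf_0$ and $\partial_vg_x$ appear in the sources, so the one-dimensional mild solutions need a $v$-derivative. I would obtain it from smoothing of the uniformly parabolic (in $v$, since $\sigma>0$) operator $\cA_\phi$ on compacts in $\log\sigma$.

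For the classical statement, under the additional regularity or hypoellipticity conditions (Hörmander's condition for the degenerate $\eta$-direction), the mild solution is $C^{1,2}$ and satisfies the PDE pointwise.
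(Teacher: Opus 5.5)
Your proof is correct in outline, but it reaches the affine form by a different route from the paper. The paper never builds an affine candidate. Starting from $f_1=-\phi\,\E^A[\mathcal I_{t,T}e^{-\phi\Delta S^0_{t,T}}]$, it observes that for a fixed volatility and Brownian path the frozen rate-state equation is affine. Variation of constants then gives $Z_s=\Phi(s,t)Z_t+\zeta_{s,t}$, with $\zeta_{s,t}$ and $\Delta S^0_{t,T}$ independent of $Z_t$. Hence $\mathcal I_{t,T}=\mathcal I^0_{t,T}+q_{t,T}^{\top}Z_t$, and $f_1$ is affine in $(\xi,\eta)$ at once. The invariance $\cL_{0,\phi}\cV\subseteq\cV$ and your Step~2 coefficient collection are used only afterwards, to read off \eqref{eq:g_system}.

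This pathwise argument gives affinity under the bare moment hypotheses, with no regularity of $f_0$ or of the $g$'s. Your ansatz-and-identification route needs more, even for the mild statement. The sources contain $\partial_vf_0$ and $\partial_vg_x$. Also, your claim that the affine function assembled from the one-dimensional solutions is a mild solution of the full problem is asserted, not shown. Justifying it requires one of two things:
\begin{enumerate}
\item the classical generator computation followed by an It\^o verification, which needs one-dimensional interior regularity in $v$ (nondegeneracy of $\cA_\phi$, i.e.\ $\nu>0$); or
\item a direct computation of $U_{0,\phi}$ on affine functions, which is precisely the paper's flow argument.
\end{enumerate}
In return, your route gives the classical statement cheaply. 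Because the $(\xi,\eta)$-dependence is polynomial, $C^{1,2}$ regularity of $g_0,g_x,g_y$ in $(\tau,v)$ already makes $f_1$ classical. No H\"ormander condition in the degenerate $\eta$-direction is needed; this is the second option mentioned in the paper's appendix. So your closing appeal to hypoellipticity is unnecessary within your own argument.

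Two smaller points. First, the frozen coefficients are read at $T-\tau$, so $P^0_{\tau-s}$ should be a time-inhomogeneous evolution family $U_{0,\phi}(\tau,s)$, as in the paper. Second, in your change-of-measure route, the residual $\widetilde W$-integral is not orthogonal to the leftover weight $e^{\frac12\phi^2\int\sigma^2a_0^2\,\dd s}$, because future volatility loads on $W$ through $\beta$. That covariance is exactly what produces the $-\phi\beta\,\partial_vf_0$ source. Replacing the stochastic integral by $-\phi\int\sigma^2a_0h\,\dd s$ alone would therefore lose it, and for complex $\phi$ the tilt is only formal anyway. Your It\^o product-rule alternative captures both source terms correctly and is the one to keep.
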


\begin{proof}
Under the frozen annuity drift the state process does not depend on \(\gamma\),
and, conditionally from time \(t\) to \(T\),
\begin{equation}
  \Delta S^\gamma_{t,T}
  =\Delta S^0_{t,T}+\gamma\mathcal I_{t,T},
  \qquad
  \mathcal I_{t,T}
  =\int_t^T\sigma_s\bigl[a_x(s)\xi_s+a_y(s)\eta_s\bigr]\dd W_s.
  \label{eq:proof_pathwise_increment}
\end{equation}
Dominated convergence under the moment assumptions of
\cref{thm:transform_differentiability} gives, on the characteristic-function
line,
\begin{equation}
  f_1(t,\xi,\eta,v)
  =-\phi\E^A\!\left[
      \mathcal I_{t,T}e^{-\phi\Delta S^0_{t,T}}
      \mid \xi_t=\xi,\eta_t=\eta,v_t=v
    \right].
  \label{eq:proof_pathwise_derivative}
\end{equation}
For complex \(\phi\), the same identity requires the exponential-moment
condition in \cref{eq:complex_transform_domain}; displaced Fourier contours
are used only where that transform-domain condition is assumed or verified.

Put \(Z=(\xi,\eta)^\top\). For fixed initial volatility and fixed future
volatility and Brownian realization, variation of constants in the frozen
affine rate-state equation gives
\begin{equation}
  Z_s=\Phi(s,t)Z_t+\zeta_{s,t},
  \label{eq:proof_affine_flow}
\end{equation}
where \(\Phi\) is the deterministic fundamental matrix and
\(\zeta_{s,t}\) is independent of the initial rate state. Consequently,
\begin{equation}
  \mathcal I_{t,T}=\mathcal I^0_{t,T}+q_{t,T}^{\top}Z_t.
\end{equation}
The frozen increment
\(\Delta S^0_{t,T}=\int_t^T\sigma_sa_0(s)\dd W_s\) is independent of the
initial centered rate state because the frozen volatility process is
autonomous with respect to \(Z_t\). Thus \cref{eq:proof_pathwise_derivative}
is affine in \(Z_t\), proving \cref{eq:f1_affine}.

Let \(U_{0,\phi}(\tau,s)\) be the time-inhomogeneous frozen-transform
evolution family on the relevant growth class. Duhamel's formula gives the
unique mild solution
\begin{equation}
  f_1(\tau)
  =\int_0^\tau U_{0,\phi}(\tau,s)
       \bigl[\cL_{1,\phi,s}f_0(s)\bigr]\dd s.
  \label{eq:f1_duhamel}
\end{equation}
The generator calculation provides an algebraic check of this pathwise
closure: for
\(\cV=\{g_0(v)+g_x(v)\xi+g_y(v)\eta\}\), one has
\(\cL_{0,\phi}\cV\subseteq\cV\), while
\begin{equation}
  \cL_{1,\phi}f_0
  =\sigma^2(a_x\xi+a_y\eta)
    \left(\phi^2a_0f_0-\phi\beta\partial_vf_0\right)\in\cV.
\end{equation}
Substitution of the already established affine representation into the mild
equation, or into \cref{eq:f1_pde} when classical derivatives exist, and
collection of the coefficients of \(1\), \(\xi\), and \(\eta\) gives
\cref{eq:g_system}. The \(+g_x\) term comes from the \(+\eta\) drift of
\(\xi\), and \(\alpha\beta\sigma^2\partial_vg_x\) comes from the mixed
\(\xi\)--\(v\) covariance. No quadratic state coefficient is generated.

The displayed PDEs hold classically only under the additional regularity or
hypoellipticity assumptions in Appendix~\ref{app:phase1_differentiability};
uniform ellipticity is not assumed.
\end{proof}

\begin{remark}
Even if the direct loading sensitivity \(a_y\) is neglected, \(g_y\) generally cannot be set to zero. The auxiliary state \(Y\) affects the drift of \(X\), which generates the \(+g_x\) coupling in the second equation of \cref{eq:g_system}.
\end{remark}

\begin{remark}
Conditional on the prescribed deterministic centering path and the frozen
annuity-measure drift, the affine dependence of the first variation on the
centered rate states is exact within the auxiliary state-linear loading
family. Relative to the original nonlinear factor-HJM model, the loading
Taylor remainder, evaluation at full loading scale \(\gamma=1\), and the E1
or numerical volatility representation remain separate approximation layers.
\end{remark}

\subsection{Log-transform equations}

Suppose \(f_0\neq0\) on the chosen contour and write
\begin{equation}
  f_0=e^A,
  \qquad
  \psi=\partial_vA.
  \label{eq:Apsi}
\end{equation}
Set
\begin{equation}
  f_1=f_0P,
  \qquad
  P=p_0+p_x\xi+p_y\eta.
  \label{eq:P_definition}
\end{equation}
Thus \(P=\left.\partial_\gamma\log f_\gamma\right|_{\gamma=0}\). Define
\begin{align}
  \chi_v
  &=b_v+\nu^2\sigma^2\psi-\phi\beta a_0\sigma^2,
  \label{eq:chi_v}\\
  \cD_\phi g
  &=\frac12\nu^2\sigma^2g_{vv}+\chi_vg_v,
  \label{eq:Dphi}\\
  \chi_x
  &=r_x+\alpha\beta\sigma^2\psi-\phi\alpha a_0\sigma^2,
  \label{eq:chi_x}\\
  q_x
  &=a_x\sigma^2(\phi^2a_0-\phi\beta\psi),
  \qquad
  q_y=a_y\sigma^2(\phi^2a_0-\phi\beta\psi).
  \label{eq:qxy}
\end{align}
Dividing \cref{eq:g_system} by \(f_0\) gives the equivalent system
\begin{equation}
\boxed{
\begin{aligned}
  \partial_\tau p_x
  &=\cD_\phi p_x-\lambda p_x+q_x,\\
  \partial_\tau p_y
  &=\cD_\phi p_y-2\lambda p_y+p_x+q_y,\\
  \partial_\tau p_0
  &=\cD_\phi p_0+\chi_xp_x+r_yp_y
    +\alpha\beta\sigma^2\partial_vp_x,
\end{aligned}}
  \label{eq:p_system}
\end{equation}
with zero initial conditions.

At inception, the deterministic centering path starts at the model state, so \(\xi_0=\eta_0=0\). If \(v_0=\sigma_0-\theta\), then
\begin{equation}
  \left.\partial_\gamma\log M_\gamma(\phi)\right|_{\gamma=0}
  =p_0(T,v_0;\phi).
  \label{eq:logmgf_derivative}
\end{equation}
The literal first-order reconstruction is
\begin{equation}
  M_\gamma(\phi)
  =M_0(\phi)\left[1+\gamma p_0(T,v_0;\phi)\right]+O(\gamma^2).
  \label{eq:mgf_reconstruction}
\end{equation}
Exponentiating the correction, \(M_0e^{\gamma p_0}\), agrees to first order but selects a particular set of higher-order terms. We use the additive form in the numerical validation.

\subsection{One-dimensional PDE in log volatility}

The exact first-variation equations are conveniently solved in log volatility
\begin{equation}
  \ell=\log\sigma,
  \qquad \sigma=e^\ell.
\end{equation}
Since
\begin{equation}
  \frac12\nu^2\sigma^2\partial_{vv}
  =\frac12\nu^2(\partial_{\ell\ell}-\partial_\ell),
\end{equation}
the volatility diffusion coefficient becomes constant. Define
\begin{equation}
  \mathfrak A_\phi u
  =\frac12\nu^2u_{\ell\ell}
   +\left(\frac{b_v}{\sigma}-\frac12\nu^2-\phi\beta a_0\sigma\right)u_\ell
   +\frac12\phi^2a_0^2\sigma^2u.
  \label{eq:logvol_operator}
\end{equation}
Then \cref{eq:g_system} becomes
\begin{align}
  \partial_\tau f_0
  &=\mathfrak A_\phi f_0,
  \label{eq:logvol_f0}\\
  \partial_\tau g_x
  &=(\mathfrak A_\phi-\lambda)g_x
    +a_x\left(-\phi\beta\sigma\partial_\ell f_0
               +\phi^2a_0\sigma^2f_0\right),
  \label{eq:logvol_gx}\\
  \partial_\tau g_y
  &=(\mathfrak A_\phi-2\lambda)g_y+g_x
    +a_y\left(-\phi\beta\sigma\partial_\ell f_0
               +\phi^2a_0\sigma^2f_0\right),
  \label{eq:logvol_gy}\\
  \partial_\tau g_0
  &=\mathfrak A_\phi g_0
    +(r_x-\phi\alpha a_0\sigma^2)g_x+r_yg_y
    +\alpha\beta\sigma\partial_\ell g_x.
  \label{eq:logvol_g0}
\end{align}
This system is used as the reference solution in the numerical section. It removes the finite polynomial approximation but still uses the frozen annuity drift, the linearized loading, and the first-order truncation in \(\gamma\).

\subsection{Finite ODE realization under the E1 volatility expansion}

The frozen quadratic-lognormal transform is not exactly affine. \citet{SeppRakhmonovLogSV2023,SeppRakhmonovFHJM2025} use a first-order affine expansion in the shifted volatility \(v=\sigma-\theta\),
\begin{equation}
  A(\tau,v;\phi)
  =A_0(\tau;\phi)+A_1(\tau;\phi)v+A_2(\tau;\phi)v^2,
  \label{eq:E1_ansatz}
\end{equation}
obtained by truncating the log-transform PDE at degree two. The three baseline ODEs are recorded in Appendix~\ref{app:E1}.

Insert
\begin{equation}
  \psi=A_1+2A_2v
\end{equation}
into \cref{eq:p_system} and approximate
\begin{equation}
  p_j(\tau,v)=\sum_{r=0}^N c_{j,r}(\tau)v^r,
  \qquad j\in\{0,x,y\}.
  \label{eq:polynomial_p}
\end{equation}
Let \(\Pi_N\) denote truncation to powers \(v^0,\ldots,v^N\). The projected equations are
\begin{align}
  \partial_\tau p_x
  &=\Pi_N\left[\cD_\phi^{E1}p_x-\lambda p_x+q_x^{E1}\right],
  \label{eq:proj_px}\\
  \partial_\tau p_y
  &=\Pi_N\left[\cD_\phi^{E1}p_y-2\lambda p_y+p_x+q_y^{E1}\right],
  \label{eq:proj_py}\\
  \partial_\tau p_0
  &=\Pi_N\left[\cD_\phi^{E1}p_0+\chi_x^{E1}p_x+r_yp_y
    +\alpha\beta\sigma^2\partial_vp_x\right].
  \label{eq:proj_p0}
\end{align}
The source \(q_x^{E1}\) has degree at most three, so \(N=3\) is the smallest degree that includes the source before projection.

For implementation, let \(D_N\) denote differentiation in the ascending monomial basis and let \(M_N[g]\) denote multiplication by a polynomial \(g\), followed by \(\Pi_N\). With
\begin{equation}
  K_N
  =\frac12\nu^2M_N[\sigma^2]D_N^2+M_N[\chi_v^{E1}]D_N,
  \label{eq:KN}
\end{equation}
the new coefficient vectors satisfy the block-linear system
\begin{equation}
\boxed{
\begin{aligned}
  \dot{\bm c}_x
  &=(K_N-\lambda I)\bm c_x+\bm q_x,\\
  \dot{\bm c}_y
  &=(K_N-2\lambda I)\bm c_y+\bm c_x+\bm q_y,\\
  \dot{\bm c}_0
  &=K_N\bm c_0+M_N[\chi_x^{E1}]\bm c_x+M_N[r_y]\bm c_y
    +\alpha\beta M_N[\sigma^2]D_N\bm c_x.
\end{aligned}}
  \label{eq:block_ode}
\end{equation}
For \(N=6\), the correction adds \(3(N+1)=21\) complex ODE coefficients to the three E1 coefficients. The baseline equations are unaffected by the correction, so the combined system is block triangular.

\begin{remark}[Exact and approximate parts]
Conditional on the prescribed centering path and frozen annuity drift, the affine dependence of the first variation on \((\xi,\eta)\) is exact within the auxiliary state-linear loading family. The full-scale first-order truncation and the finite polynomial dependence on \(v\) in \cref{eq:polynomial_p} are additional approximations. The polynomial representation can be highly accurate, but increasing \(N\) is not guaranteed to produce monotone convergence when the volatility distribution is broad.
\end{remark}

\subsection{Extension to multiple curve factors}

Let \(z\in\R^n\) be a centered vector containing the principal and auxiliary rate states, and suppose the frozen state dynamics have affine drift
\begin{equation}
  \dd z_t=(K(t)z_t+r(t,v_t))\dd t+\sigma_tG(t)\dd W_t.
\end{equation}
For a vector swap-rate loading
\begin{equation}
  a_\gamma(t,z)=a_0(t)+\gamma J(t)z,
\end{equation}
the source in the first-variation equation is linear in \(z\), and the frozen generator preserves functions of the form
\begin{equation}
  g_0(v)+g_z(v)^{\top}z.
\end{equation}
The scalar damping terms \(-\lambda g_x\), \(-2\lambda g_y\), and the coupling \(+g_x\) in \cref{eq:g_system} are replaced by the transpose of the rate-state drift matrix \(K(t)^{\top}\). The number of one-dimensional coefficient functions grows linearly with the centered-state dimension \(n\), including both principal and auxiliary rate states. The finite polynomial projection in volatility is unchanged in principle.

The full matrix-valued coefficient system and its proof are given in Appendix~\ref{app:phase1_multidimensional}.

\subsection{Approximation errors}

When comparing the corrected transform with the fully nonlinear factor HJM model, it is useful to distinguish five separate sources of approximation error:

\begin{enumerate}[label=(\arabic*)]

\item \textbf{Deterministic centering closure.} The state expansion is evaluated along a prescribed expected-state proxy, including the mean-volatility closure in \cref{eq:mean_volatility_closure}; that path is not asserted to equal the exact stochastic moments.

\item \textbf{Frozen annuity drift.} The state-dependent coefficient \(L_X(t,X,Y)\) is replaced by the deterministic coefficient \(L_0(t)\).

\item \textbf{Loading linearization.} The nonlinear swap-rate loading is replaced by its first-order Taylor approximation \(a_0+a_x\xi+a_y\eta\), thereby neglecting quadratic and higher-order dependence on the rate states.

\item \textbf{First-order perturbation.} The transform is truncated after first order in \(\gamma\). The terms \(\cL_{1,\phi}f_1\) and \(\cL_{2,\phi}f_0\), including the contribution proportional to \(h^2\), enter at order \(\gamma^2\).

\item \textbf{E1 and polynomial approximation.} The one-dimensional volatility equations for the first variation are approximated using the E1 representation and the finite polynomial expansion in \cref{eq:E1_ansatz,eq:polynomial_p}.

\end{enumerate}

The loading-error analysis in \cref{sec:loading_approximation_accuracy} quantifies the effect of the linearization in item~(3). The one-dimensional PDE provides a reference solution that does not use the E1 or finite polynomial approximations in item~(5), while the pathwise first-variation identities in Appendix~\ref{app:mc_identities} check the perturbation result independently. A separate comparison of frozen and state-dependent annuity drifts measures the contribution of item~(2); in the parameter regimes considered here, it is not the dominant source of the observed calibration bias.

\section{Numerical design}\label{sec:numerics_design}

We first measure the calibration bias caused by freezing the swap-rate loading. We then test whether the first-variation transform reproduces the pricing effect of restoring its state dependence, comparing it with a one-dimensional PDE, pathwise derivative estimates, and its finite E1 ODE representation. Finally, we calibrate the corrected Fourier pricer to swaption prices generated independently from the fully nonlinear model.

\subsection{Numerical implementation}\label{sec:reference_implementation}

The factor HJM conventions, expected-state coefficient freeze, quadratic-drift lognormal volatility model, E1 transform equations, and Fourier-pricing architecture follow \citet{SeppRakhmonovFHJM2025} and their reference implementation \citep{StochVolModelsRepository}.

Substituting \cref{eq:E1_ansatz} into the full frozen-loading log-transform PDE and truncating at degree two gives the baseline E1 right-hand side. At random complex coefficients it agrees with the reference E1 equations to approximately \(2\times10^{-14}\). Symbolic substitution in \cref{eq:g_system,eq:p_system} gives identically zero residuals for the coefficients of
\begin{equation}
  1,\quad \xi,\quad \eta,\quad \xi^2,\quad \xi\eta,\quad \eta^2.
\end{equation}
\subsection{Common scalar Cheyette specification}\label{sec:common_specification}

All experiments use the one-factor Cheyette bond representation
\begin{equation}
  P(t,T)
  =\frac{P(0,T)}{P(0,t)}
   \exp\!\left[-G(t,T)X_t-\frac12G(t,T)^2Y_t\right],
  \qquad
  G(t,T)=\frac{1-e^{-\lambda(T-t)}}{\lambda}.
  \label{eq:cheyette_bond_numerical}
\end{equation}
The continuously compounded initial zero rate is flat at \(2.5\%\), and fixed-leg payments are annual. The flat curve isolates the loading approximation and keeps the numerical experiments comparable; the derivation itself does not depend on this choice.

Unless stated otherwise, \(\theta=\sigma_0=1\). We report the instantaneous volatility magnitude and rate--volatility correlation as
\begin{equation}
  \nu=\sqrt{\beta^2+\epsilon^2},
  \qquad
  \rho=\frac{\beta}{\sqrt{\beta^2+\epsilon^2}}.
  \label{eq:nu_rho}
\end{equation}
The sign of \(\beta\) determines the direction of instantaneous dependence between the curve Brownian motion and the volatility process.

\subsection{Loading approximation and synthetic calibration}
\label{sec:calibration_design}

For each simulated state path, we evaluate the frozen, state-linear, and nonlinear swap-rate loadings,
\begin{align}
  a_{\frozen}(t)&=a_0(t),
  \label{eq:loading_frozen_num}\\
  a_{\linear}(t)&=a_0(t)+a_x(t)\xi_t+a_y(t)\eta_t,
  \label{eq:loading_linear_num}\\
  a_{\exact}(t)&=\alpha_t S_x(t,X_t,Y_t).
  \label{eq:loading_exact_num}
\end{align}
As a second-order diagnostic, define
\begin{align}
  a_{\mathrm{quad}}(t)
  ={}&a_0(t)+a_x(t)\xi_t+a_y(t)\eta_t
      +\frac12a_{xx}(t)\xi_t^2
      +a_{xy}(t)\xi_t\eta_t
      +\frac12a_{yy}(t)\eta_t^2.
  \label{eq:loading_quadratic_num}
\end{align}
The first loading derivatives use the closed-form bond and swap derivatives.
The Hessian entries are obtained by five-point centered differences of those
closed-form first derivatives and are checked for step-size convergence at
perturbed states. Let \(P_{\mathrm{quad}}\) denote the corresponding Monte
Carlo price.

The accuracy of the frozen and state-linear approximations is measured on common simulated paths using the variance-weighted relative errors
\begin{align}
  R_{\frozen}^2
  &=\frac{\E\!\left[\int_0^T\sigma_t^2
       |a_{\exact}(t)-a_{\frozen}(t)|^2\dd t\right]}
      {\E\!\left[\int_0^T\sigma_t^2|a_{\exact}(t)|^2\dd t\right]},
  \label{eq:Rfrozen}\\
  R_{\linear}^2
  &=\frac{\E\!\left[\int_0^T\sigma_t^2
       |a_{\exact}(t)-a_{\linear}(t)|^2\dd t\right]}
      {\E\!\left[\int_0^T\sigma_t^2|a_{\exact}(t)|^2\dd t\right]}.
  \label{eq:Rlinear}
  \\
  R_{\mathrm{quad}}^2
  &=\frac{\E\!\left[\int_0^T\sigma_t^2
       |a_{\exact}(t)-a_{\mathrm{quad}}(t)|^2\dd t\right]}
      {\E\!\left[\int_0^T\sigma_t^2|a_{\exact}(t)|^2\dd t\right]}.
  \label{eq:Rquadratic}
\end{align}

To make the contribution of the auxiliary state unambiguous, put
\(h_x=a_x\xi\), \(h_y=a_y\eta\), and \(h=h_x+h_y\), and define
\begin{equation}
  E_x=\frac{\E\int_0^T\sigma_t^2h_{x,t}^2\dd t}
             {\E\int_0^T\sigma_t^2h_t^2\dd t},\qquad
  E_y=\frac{\E\int_0^T\sigma_t^2h_{y,t}^2\dd t}
             {\E\int_0^T\sigma_t^2h_t^2\dd t},\qquad
  E_{xy}=\frac{2\E\int_0^T\sigma_t^2h_{x,t}h_{y,t}\dd t}
             {\E\int_0^T\sigma_t^2h_t^2\dd t}.
  \label{eq:loading_increment_energy}
\end{equation}
Thus \(E_x+E_y+E_{xy}=1\); in particular, \(E_y\) is not interpreted as an
additive fraction without the cross term. For each specification we report
the maxima of these coefficient diagnostics across the three calibrated
expiry--tenor pairs. The \(10\mathrm{Y}\times10\mathrm{Y}\) swaption is used
only for transform validation.

Synthetic option prices are generated with the pathwise nonlinear loading \cref{eq:loading_exact_num} and the nonlinear annuity-measure drift. Both calibration models use the expected-state frozen annuity drift, together with either \cref{eq:loading_frozen_num} or \cref{eq:loading_linear_num}. The goal of this comparison is to see whether restoring first-order state dependence in the swap loading removes a material part of the bias even when the measure-drift approximation is left unchanged.

After calibration, we also reprice each fitted parameter vector under the full nonlinear model. For \(m\in\{F,C\}\), denoting the frozen and corrected calibration estimates by \(\widehat\theta_m\), the diagnostic is
\begin{equation}
  \Delta P^{\mathrm{NL}}_{m,q}
  =P^{\mathrm{NL}}_q(\widehat\theta_m)
   -P^{\mathrm{NL}}_q(\theta^\star),
  \label{eq:full-model-repricing}
\end{equation}
where both prices use the nonlinear loading and nonlinear annuity-measure drift. The differences are estimated directly using common antithetic scrambled-Sobol paths for the fitted and data-generating parameter vectors, with first-order Richardson correction for time discretization. The path counts, randomizations, and time grids are chosen so that the resulting Monte Carlo error is small relative to the reported repricing differences.

The calibration experiments use three payer-swaption smiles,
\begin{equation}
  1\mathrm{Y}\times5\mathrm{Y},\qquad
  3\mathrm{Y}\times7\mathrm{Y},\qquad
  5\mathrm{Y}\times10\mathrm{Y}.
  \label{eq:calibration_swaptions}
\end{equation}
For each swaption \(i\), let \(F_i\) denote the initial forward swap rate and let
\[
  s_i=\widehat{\operatorname{sd}}(S_{T_i})
\]
denote the terminal swap-rate standard deviation estimated from a separate pilot simulation under the nonlinear data-generating model. Strikes are specified in standardized moneyness,
\begin{equation}
  K_i(m)=F_i+m\,s_i.
  \label{eq:calibration_moneyness}
\end{equation}

The holdout strikes are interlaced with the calibration grid and are not used in parameter estimation. They therefore provide a within-smile assessment of interpolation accuracy at strikes not included in the calibration objective.

Prices are converted to Bachelier implied volatilities for reporting. One normal-volatility basis point denotes \(10^{-4}\) in annualized normal volatility. For calibration quote \(q\), define
\begin{equation}
  \widetilde{\operatorname{Vega}}^N_q
  =
  \max\!\left(
    \operatorname{Vega}^N_q,\,
    0.2\,\operatorname{Vega}^N_{\mathrm{ATM},i(q)}
  \right),
  \label{eq:vega_weight}
\end{equation}
where \(i(q)\) denotes the corresponding swaption smile. The least-squares residual is
\begin{equation}
  r_q(\vartheta)
  =
  \frac{
    P_q(\vartheta)-P_q^\star
  }{
    \widetilde{\operatorname{Vega}}^N_q\,10^{-4}
  }.
  \label{eq:calibration_residual}
\end{equation}
This scaling expresses the calibration residual approximately in normal-volatility basis points while preventing very low-vega strikes from receiving excessive weight.

\Cref{tab:calibration-setup} summarizes the calibration grid, weights,
parameterization, bounds, and optimizer.

\begin{table}[H]
\centering
\caption{Calibration setup. Moneyness is measured in pilot standard deviations about the forward.}
\label{tab:calibration-setup}
\small
\begin{tabular}{@{}lp{0.69\linewidth}@{}}
\toprule
Item & Specification \\
\midrule
Smiles & 1Y$\times$5Y, 3Y$\times$7Y, and 5Y$\times$10Y (three smiles) \\
Fit grid & $[-1.25, -0.625, 0, 0.625, 1.25]$; five strikes per smile, 15 quotes \\
Holdout grid & $[-1, -0.3125, 0.3125, 1]$; four strikes per smile, 12 quotes \\
Weights & Normal-vega residuals with $\widetilde V=\max\{V,0.20V_{\rm ATM}\}$ \\
Coordinates & $\log(\alpha/0.012)$, $\beta/0.25$, and $\log(\epsilon/0.25)$, bucket by bucket \\
Bounds & $\alpha\in[0.004,0.025]$, $\beta\in[-0.45,0.10]$, $\epsilon\in[0.03,0.70]$ \\
Optimizer & Bounded least squares; two-point Jacobian, step $5\times10^{-4}$, Jacobian scaling, tolerance $2\times10^{-6}$, two starts \\
\bottomrule
\end{tabular}
\end{table}

Reference quotes for the ordinary regimes are generated independently of both calibration models. The estimator combines a frozen-loading PDE control variate with a common-path randomized-QMC estimate of the price difference between the nonlinear and frozen loadings, together with an independently randomized Richardson correction for time discretization. The one-bucket surfaces use eight independent scrambles, while the two-bucket surface uses 24. The corresponding maximum quote standard errors are \(\ModerateMaximumQuoteSeBp\), \(\NegativeMaximumQuoteSeBp\), and \(\TwoBucketMaximumQuoteSeBp\) normal-volatility bp. Calibration and held-out strikes are generated jointly from the same underlying randomized-QMC calculations. Each calibration is performed from two initial parameter vectors using common parameter bounds and convergence tolerances, and the full weighted Jacobian is used in the subsequent conditioning analysis.

The bounded least-squares coordinates are, bucket by bucket,
\(\log(\alpha/0.012)\), \(\beta/0.25\), and
\(\log(\epsilon/0.25)\). Their natural-parameter bounds are
\(\alpha\in[0.004,0.025]\), \(\beta\in[-0.45,0.10]\), and
\(\epsilon\in[0.03,0.70]\). These bounds apply to every regime.

Calibration uncertainty is examined using two paired replicate designs. First, \(\QmcReplicateCount\) quote replicates are constructed from independent randomized-QMC scrambles using the same estimator as the reference surface. Second, \(\MarketReplicateCount\) synthetic quote perturbations are generated with marginal scale \(\MarketQuoteNoiseBp\) normal-volatility bp and correlated level, expiry, skew, curvature, and idiosyncratic components. Antithetic pairing ensures that the perturbations have exactly zero mean across the ensemble. These perturbations assess sensitivity to small structured changes in the quote surface. In the singular-spectrum analysis, a change is classified as material only if it exceeds the 20\% ratio threshold and the corresponding paired confidence interval excludes one. Changes in the associated singular vectors are assessed separately.

The parameter regimes are summarized in \cref{tab:calibration_regimes}. The Moderate regime is the least stressed stochastic volatility specification, while the Negative skew regime introduces stronger negative dependence between rate and volatility shocks. The State-dependence stress combines slower rate mean reversion, larger rate volatility, stronger correlated volatility, and weaker volatility mean reversion. These changes increase the variation of the swap-rate loading over the relevant state region and therefore provide a more demanding test of the state-linear approximation. The Broad-volatility stress uses a separate parameter set with a larger volatility-diffusion scale. The Two buckets specification allows each of \(\alpha\), \(\beta\), and \(\epsilon\) to vary across two time intervals, with a common bucket boundary at five years.

\begin{table}[H]
\centering
  \caption{Synthetic calibration regimes. In all cases, \(\theta=\sigma_0=1\). For the Two buckets regime, parenthesized values are the first and second time buckets.}
\label{tab:calibration_regimes}
\small
\begin{tabular}{lrrrrrr}
\toprule
Regime & $\lambda$ & $\kappa_1$ & $\kappa_2$ & $\alpha$ & $\beta$ & $\epsilon$ \\
\midrule
Moderate & 0.060 & 0.45 & 0.65 & 0.0120 & -0.050 & 0.180 \\
Negative skew & 0.060 & 0.25 & 0.40 & 0.0120 & -0.200 & 0.300 \\
State-dependence stress & 0.025 & 0.10 & 0.25 & 0.0180 & -0.300 & 0.350 \\
Broad-volatility stress & 0.040 & 0.08 & 0.18 & 0.0140 & -0.380 & 0.550 \\
Two buckets & 0.060 & 0.25 & 0.40 & (0.0105, 0.0135) & (-0.100, -0.220) & (0.180, 0.300) \\
\bottomrule
\end{tabular}

\end{table}

For the common-random-number error decomposition, we use eight independently scrambled antithetic Sobol replicates with \(32768\) paths and \(128\) time steps per year. Time-discretization corrections are obtained from matched calculations at \(64\) and \(128\) time steps per year.

\subsection{First-variation transform validation}\label{sec:transform_design}

The transform validation uses the Negative skew parameter set in \cref{tab:calibration_regimes}. The principal cases are a \(3\mathrm{Y}\times7\mathrm{Y}\) and a \(10\mathrm{Y}\times10\mathrm{Y}\) swaption. The E1 transform stress uses the State-dependence stress parameters and a \(5\mathrm{Y}\times10\mathrm{Y}\) swaption. Deterministic coefficient paths are prepared on 24 calendar steps per year.

Three independent calculations of the first variation are compared.

\paragraph{One-dimensional PDE.}
The exact system \cref{eq:logvol_f0,eq:logvol_gx,eq:logvol_gy,eq:logvol_g0} is solved by Crank--Nicolson time stepping on an equally spaced log-volatility grid with centered finite differences and remote reflecting boundaries \citep{CrankNicolson1947}. The Negative skew reference uses \(241\) spatial points on \([-3.5,1.7]\) and 48 steps per year. Wider domains, \(321\) points, and 96 steps per year are used for refinement. 

\paragraph{Finite E1 ODE.}
The baseline coefficients \((A_0,A_1,A_2)\) and the projected correction coefficients in \cref{eq:block_ode} are solved with adaptive SciPy integration. Degrees \(N=3,4,6,8,10\) are used to diagnose stability; \(N=6\) is the principal specification for the Negative skew regime. The literal first-order reconstruction \cref{eq:mgf_reconstruction} is used.

\paragraph{Pathwise quasi-Monte Carlo.}
Under the frozen annuity drift, the state process does not depend on \(\gamma\), and the state-linear swap-rate increment is pathwise affine in \(\gamma\):
\begin{equation}
  \Delta S^\gamma_{0,T}=\Delta S^0_{0,T}+\gamma\mathcal I_{0,T},
  \qquad
  \mathcal I_{0,T}=\int_0^T\sigma_t(a_x\xi_t+a_y\eta_t)\dd W_t.
  \label{eq:pathwise_affine_gamma}
\end{equation}
Consequently,
\begin{equation}
  \left.\partial_\gamma
    \E[e^{-\phi\Delta S^\gamma_{0,T}}]\right|_{\gamma=0}
  =\E[-\phi\mathcal I_{0,T} e^{-\phi\Delta S^0_{0,T}}],
  \label{eq:pathwise_mgf_derivative}
\end{equation}
and for a normalized payer call,
\begin{equation}
  \left.\partial_\gamma
    \E[(F_0+\Delta S^\gamma_{0,T}-K)^+]\right|_{\gamma=0}
  =\E[\mathcal I_{0,T}\,\mathbf 1_{\{F_0+\Delta S^0_{0,T}>K\}}].
  \label{eq:pathwise_call_derivative}
\end{equation}
These identities avoid finite differencing in \(\gamma\). The Negative skew comparisons use eight independently scrambled Sobol replicates of \(8192\) paths; six replicates are used in the E1 transform stress \citep{Owen1997,Glasserman2004}.

The transform derivative is inverted with a damped double-exponential Fourier rule of the type discussed by \citet{Lewis2000,AndersenLake2019}. The implementation is separately checked against Bachelier prices from a Gaussian moment-generating function. Nine strikes from minus two to plus two terminal standard deviations are used in the price-level comparison.

\subsection{Error decomposition}
\label{sec:pricing_decomposition}

\begin{table}[t]
\centering
\caption{Pricing quantities used in the error decomposition. The quadratic-loading diagnostic $P_{\rm quad}$ is not an A--G label.}
\label{tab:pricing-decomposition}
\scriptsize
\resizebox{\textwidth}{!}{%
\begin{tabular}{@{}clllp{0.37\linewidth}@{}}
\toprule
Object & loading & drift & solver & role \\
\midrule
$A$ & frozen & frozen & Monte Carlo & constant-loading frozen-drift price \\
$B$ & state-linear & frozen & Monte Carlo & linear-loading frozen-drift price \\
$C$ & exact nonlinear & frozen & Monte Carlo & exact-loading frozen-drift price \\
$D$ & exact nonlinear & nonlinear & Monte Carlo & exact-loading exact-drift benchmark price \\
$P_{\rm quad}$ & quadratic & frozen & Monte Carlo & standalone quadratic-loading Monte Carlo diagnostic; not an A--G label \\
$E$ & frozen & frozen & E1 Fourier & zeroth-order Fourier price \\
$F$ & first-order corrected & frozen & E1 Fourier & first-order corrected Fourier price \\
$G_0$ & frozen & frozen & log-vol PDE/spectral & zeroth-order PDE price \\
$G$ & first-order corrected & frozen & log-vol PDE/spectral & first-order corrected PDE price \\
\bottomrule
\end{tabular}}
\end{table}

We use \(B-A\) for the state-linear loading effect, \(C-B\) for the loading
Taylor remainder, and \(D-C\) for the frozen-drift error. The E1 total
first-order residual is \((F-E)-(B-A)\), and the PDE/spectral total residual
is \((G-G_0)-(B-A)\). These quantities include both the first-order
truncation remainder at \(\gamma=1\) and any error in the corresponding
numerical representation of the first variation.

\subsection{Accuracy criteria}
\label{sec:validity_regions}

We fixed the following error thresholds before evaluating the stress cases. Let \(\lVert\cdot\rVert\) denote the RMS norm over the complete expiry--tenor--strike panel, and let
\[
    L=\lVert B-A\rVert
\]
denote the price effect of replacing the frozen loading by its state-linear approximation in \cref{tab:pricing-decomposition}. All relative errors are measured against this quantity. Let \(R_{\mathrm{total}}\) denote the total first-order residual for the numerical representation used in a given regime.

For a given parameter regime, we require the numerical representation used for evaluation to be stable and to satisfy the price-admissibility checks, and the state-linear loading effect \(L\) to be statistically distinguishable from simulation error. The approximation errors must additionally satisfy
\begin{equation}
  \frac{\lVert C-B\rVert}{L}\leq 10\%,\qquad
  \frac{\lVert D-C\rVert}{L}\leq 5\%,\qquad
  \frac{\lVert R_{\gamma=1}\rVert}{L}\leq 10\%,\qquad
  \frac{\lVert R_{\mathrm{total}}\rVert}{L}\leq 5\%.
  \label{eq:credible_region}
\end{equation}
The maximum absolute value of \(R_{\mathrm{total}}\) at any individual quote must not exceed \(\QuoteResidualThresholdBp\) normal-volatility bp. As an additional diagnostic of the loading expansion, the quadratic approximation is considered accurate when
\[
    \frac{\lVert P_{\mathrm{ex}}-P_{\mathrm{quad}}\rVert}
         {\lVert P_{\mathrm{ex}}-P_{\mathrm{lin}}\rVert}\leq 10\%.
\]
Failure of the E1 projection need not imply failure of the one-dimensional first-variation equations. We use E1 when degree and time-discretization refinements indicate numerical stability; otherwise, we solve the same equations with the log-volatility PDE/spectral method, subject to the same refinement and price-admissibility requirements.

The ordinary regimes meet these criteria with E1. In both stress regimes, the PDE/spectral calculation is numerically stable but the total first-order residual exceeds the tolerance. This classification applies only to the five parameter specifications considered here.

\subsection{Robustness checks}
\label{sec:robustness_design}

We add two robustness checks: an interpolation toward the State-dependence stress and a non-flat initial curve.

First, we consider a one-dimensional family of parameter specifications connecting the Negative skew regime to the State-dependence stress:
\[
  \vartheta(u)
  =
  (1-u)\vartheta_{\mathrm{Negative\ skew}}
  +
  u\vartheta_{\mathrm{State\ stress}},
  \qquad
  u\in[0,1].
\]
Rate mean reversion, the two volatility-drift coefficients, long-run and initial volatility, and the parameters \(\alpha\), \(\beta\), and \(\epsilon\) are interpolated linearly in their natural parameterization. Across \(u\), we use the same swaption panel, randomized-QMC construction, randomizations, accuracy thresholds, and rule for choosing between E1 and the PDE/spectral representation. The resulting path shows how the approximation changes between the Negative skew and State-dependence stress specifications.
Second, we examine sensitivity to the initial term structure. The flat \(2.5\%\) initial curve is replaced by a non-flat continuously compounded zero curve with nodes
\[
  \begin{aligned}
  (T,z(T))={}&(0,1.80\%),\,(1,2.00\%),\,(2,2.15\%),\,(5,2.40\%),\\
             &(10,2.70\%),\,(20,2.90\%),\,(30,3.00\%).
  \end{aligned}
\]
Zero rates are linearly interpolated between these maturities and held constant beyond the endpoint nodes. We chose the curve and interpolation rule before running the comparison. All other settings match the Moderate parameter regime.

\section{Numerical results}\label{sec:numerical_results}

\subsection{Accuracy of the loading approximation}
\label{sec:loading_approximation_accuracy}

The identity \(E_x+E_y+E_{xy}=1\) holds to numerical precision in every row.
The table reports the cross term explicitly, so \(E_y\) is not an additive
share.

\begin{samepage}
\noindent The state-linear loading error is largest in the two stress cases,
although the quadratic Taylor term substantially reduces it.
\end{samepage}

\begin{table}[t]
\centering
  \caption{Accuracy of the loading approximation. Entries after $\pm$ are randomized-QMC standard errors. Maxima are over the three calibrated instruments; the instrument column identifies their common maximizer and separately notes any different $E_{xy}$ maximizer. Each estimate uses 8 antithetic scrambled-Sobol replicates of 32768 paths and a first-order Richardson correction from paired 64- and 128-step-per-year grids. Reduction denotes $R_{\rm fr}/R_{\rm lin}$; the price diagnostic is $\lVert P_{\rm ex}-P_{\rm quad}\rVert/\lVert P_{\rm ex}-P_{\rm lin}\rVert$.}
\label{tab:loading-approximation-accuracy}
\footnotesize
\begin{tabular}{@{}llcccc@{}}
\toprule
Regime & instrument(s) & $R_{\rm fr}$ & $R_{\rm lin}$ & $R_{\rm quad}$ & reduction \\
\midrule
Moderate & 5Yx10Y & 0.02491$\pm$1e-05 & 0.0007205$\pm$1e-06 & 1.362e-05$\pm$5e-08 & 34.6 \\
Negative skew & 5Yx10Y & 0.0289$\pm$5e-05 & 0.001254$\pm$2e-05 & 2.749e-05$\pm$8e-07 & 23.0 \\
Two buckets & 5Yx10Y & 0.02765$\pm$2e-05 & 0.00108$\pm$8e-06 & 2.284e-05$\pm$3e-07 & 25.6 \\
State-dependence stress & 5Yx10Y & 0.05923$\pm$0.0001 & 0.008208$\pm$0.0002 & 0.0003688$\pm$3e-05 & 7.2 \\
Broad-volatility stress & 5Yx10Y & 0.07869$\pm$0.0004 & 0.01711$\pm$0.0003 & 0.0006239$\pm$4e-05 & 4.6 \\
\bottomrule
\end{tabular}

\vspace{4pt}

\begin{tabular}{@{}llccc@{}}
\toprule
Regime & instrument(s) & max $E_y$ & $E_{xy}$ at max $|E_{xy}|$ & price diagnostic \\
\midrule
Moderate & 5Yx10Y & 3.151e-05$\pm$4e-08 & -0.00285$\pm$9e-06 & 0.96\% \\
Negative skew & 5Yx10Y & 0.0002726$\pm$7e-07 & -0.01934$\pm$4e-05 & 0.95\% \\
Two buckets & 5Yx10Y & 7.115e-05$\pm$2e-07 & -0.009404$\pm$3e-05 & 1.03\% \\
State-dependence stress & $E_y$: 5Yx10Y; $E_{xy}$: 3Yx7Y & 0.0006169$\pm$4e-06 & -0.02489$\pm$0.0002 & 2.34\% \\
Broad-volatility stress & 5Yx10Y & 0.003596$\pm$7e-05 & -0.09078$\pm$0.001 & 1.60\% \\
\bottomrule
\end{tabular}
\end{table}

\subsection{Calibration accuracy and parameter recovery}
\label{sec:coefficient_calibration_results}

\Cref{tab:main_calibration_results} summarizes the calibration results for the ordinary regimes. The reference swaption prices are generated independently under the nonlinear swap-rate loading and are therefore not produced by either of the models used in calibration. In all three cases, incorporating the first-order state dependence of the loading reduces both parameter-recovery bias and pricing errors at the held-out strikes. The improvement is most pronounced in the one-bucket specifications and remains visible, although smaller, in the six-parameter Two buckets calibration.

\begin{table}[H]
  \centering
  \caption{Calibration to independently generated nonlinear-model swaption prices. Fit and holdout RMSE are reported in normal-volatility basis points. The \(\beta\)-bias column reports the largest absolute bias across time buckets.}
  \label{tab:main_calibration_results}
  \small
  \resizebox{\textwidth}{!}{\begin{tabular}{llrrrr}
\toprule
Regime & Pricer & Fit RMSE & Holdout RMSE & Max. $|\beta|$ bias & Condition \\
\midrule
Moderate & Frozen E1 & 0.3771 & 0.3169 & 0.03005 & 326.4 \\
Moderate & Corrected E1 & 0.0151 & 0.0118 & 0.00006 & 271.4 \\
\addlinespace
Negative skew & Frozen E1 & 0.3136 & 0.2622 & 0.02793 & 61.1 \\
Negative skew & Corrected E1 & 0.0381 & 0.0323 & 0.00342 & 59.9 \\
\addlinespace
Two buckets & Frozen E1 & 0.1236 & 0.1102 & 0.06336 & 344.2 \\
Two buckets & Corrected E1 & 0.0668 & 0.0658 & 0.00640 & 315.8 \\
\bottomrule
\end{tabular}
}
\end{table}

For the Moderate specification, the held-out RMSE decreases from
\(\ModerateFrozenHeldoutBp\) to \(\ModerateCorrectedHeldoutBp\) normal-volatility bp. Under the Negative skew specification it decreases from
\(\NegativeFrozenHeldoutBp\) to \(\NegativeCorrectedHeldoutBp\) bp, while in the Two buckets specification it decreases from
\(\TwoBucketFrozenHeldoutBp\) to \(\TwoBucketCorrectedHeldoutBp\) bp. These correspond to reductions by factors of
\(\ModerateHoldoutImprovementFactor\),
\(\NegativeHoldoutImprovementFactor\), and
\(\TwoBucketHoldoutImprovementFactor\), respectively. Both initializations converge to interior solutions in each case, and the sampling uncertainty of the reference quotes is small relative to the reported pricing differences.

\begin{figure}[H]
  \centering
  \includegraphics[width=0.74\textwidth]{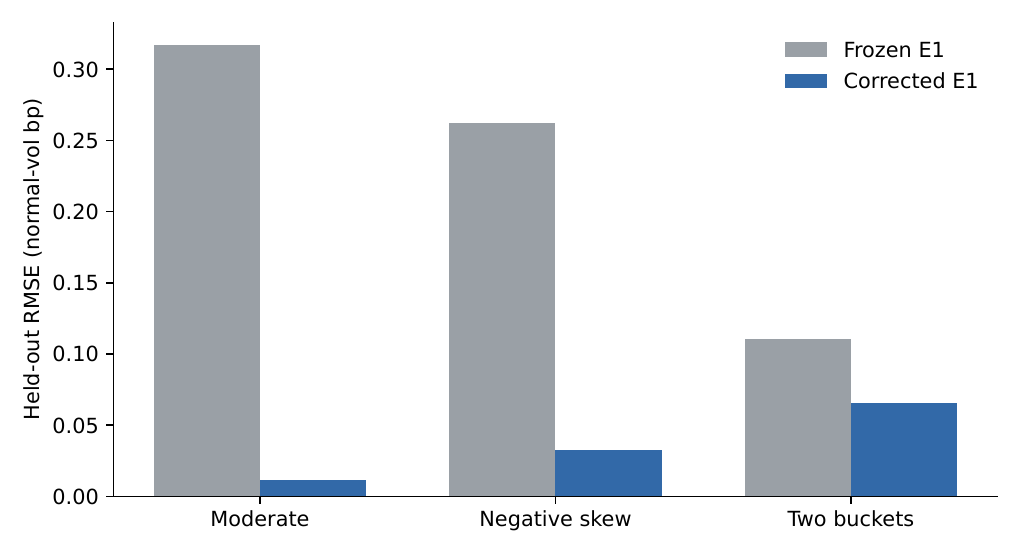}
  \caption{Held-out smile RMSE for the frozen and state-linear E1 calibrations. The state-linear correction reduces the pricing error in each of the three ordinary regimes.}
  \label{fig:holdout_rmse}
\end{figure}

The effect on parameter recovery is particularly clear for \(\beta\), which determines the component of the volatility shock correlated with the curve-factor Brownian motion. \Cref{tab:beta_recovery} reports the fitted values. Under the frozen-loading approximation, the estimated magnitude of \(\beta\) is systematically biased toward zero. Incorporating the state-linear loading substantially reduces this bias in every bucket, without introducing additional calibration parameters.

\begin{table}[H]
  \centering
  \caption{Recovery of the correlated volatility loading.}
  \label{tab:beta_recovery}
  \small
  \begin{tabular}{lrrrr}
\toprule
Regime & Bucket & Truth & Frozen E1 & Corrected E1 \\
\midrule
Moderate & 1 & -0.0500 & -0.0199 & -0.0499 \\
Negative skew & 1 & -0.2000 & -0.1721 & -0.1966 \\
Two buckets & 1 & -0.1000 & -0.0823 & -0.0995 \\
Two buckets & 2 & -0.2200 & -0.1566 & -0.2136 \\
\bottomrule
\end{tabular}

\end{table}

\Cref{tab:full-model-repricing} compares the nonlinear-model prices implied by the
parameter estimates obtained from the frozen and corrected calibrations. For each
regime, both parameter vectors are therefore evaluated under the same full
nonlinear dynamics. The parameters obtained from the corrected calibration
reproduce the nonlinear data-generating price surface substantially more closely,
both at the calibration strikes and at the held-out strikes. This improvement is
present in all three regimes, and the paired confidence intervals for the
reductions in RMSE exclude zero.

\begin{table}[H]

  \centering

  \caption{Full nonlinear-model repricing under the parameter estimates obtained
  from the frozen and corrected calibrations. Errors are measured relative to the
  nonlinear data-generating surface and reported in normal-volatility basis
  points. The final column reports the maximum absolute error across the fitting
  and held-out panels. The largest quote-level Monte Carlo standard error is
  \(\FullModelRepricingMaximumSeBp\) bp.}

  \label{tab:full-model-repricing}

  \small

  \begin{tabular}{llrrr}
\toprule
Regime & Parameters from & NL fit RMSE & NL holdout RMSE & Max. NL error \\
\midrule
Moderate & Frozen calibration & 0.8688 & 0.7297 & 1.3737 \\
Moderate & Corrected calibration & 0.0170 & 0.0178 & 0.0211 \\
\addlinespace
Negative skew & Frozen calibration & 0.9996 & 0.8526 & 1.5547 \\
Negative skew & Corrected calibration & 0.1125 & 0.0973 & 0.2003 \\
\addlinespace
Two buckets & Frozen calibration & 0.9112 & 0.7715 & 2.0363 \\
Two buckets & Corrected calibration & 0.0943 & 0.0898 & 0.2199 \\
\bottomrule
\end{tabular}

\end{table}

\subsubsection{Calibration uncertainty and local identifiability}

Parameter standard deviations are nearly unchanged after introducing the
correction. In both replicate designs, the paired 95\% confidence interval for
corrected-minus-frozen holdout RMSE lies below zero.

\begin{table}[!t]
\centering
\caption{Calibration uncertainty. Parameter entries are the largest bucketwise standard deviations for frozen/corrected fits. Intervals are 95\% paired percentile-bootstrap intervals for the median.}
\label{tab:calibration-uncertainty-summary}
\footnotesize
\begin{tabular}{@{}llrcc@{}}
\toprule
Regime & family & $n$ & sd$(\beta)$ F/C & sd$(\epsilon)$ F/C \\
\midrule
Moderate & QMC & 8 & 0.00013/0.00013 & 0.0011/0.001 \\
Moderate & Quote perturbation & 20 & 0.0028/0.0028 & 0.038/0.033 \\
Negative skew & QMC & 8 & 0.00012/0.00012 & 0.00038/0.00037 \\
Negative skew & Quote perturbation & 20 & 0.0023/0.0023 & 0.0093/0.0092 \\
Two buckets & QMC & 8 & 0.00021/0.00021 & 0.0019/0.0019 \\
Two buckets & Quote perturbation & 20 & 0.0059/0.0059 & 0.079/0.073 \\
\bottomrule
\end{tabular}

\vspace{4pt}

\begin{tabular}{@{}llccc@{}}
\toprule
Regime & family & $\Delta$ holdout RMSE (bp) & $s_{\min}^{\rm C}/s_{\min}^{\rm F}$ & $|v_{\min}^{\rm C\top}v_{\min}^{\rm F}|$ \\
\midrule
Moderate & QMC & -0.304 [-0.307,-0.299] & 1.204 [1.202,1.205] & 0.999999 \\
Moderate & Quote perturbation & -0.199 [-0.210,-0.151] & 1.204 [1.167,1.269] & 0.999999 \\
Negative skew & QMC & -0.227 [-0.235,-0.224] & 1.020 [1.020,1.020] & 0.999987 \\
Negative skew & Quote perturbation & -0.136 [-0.162,-0.121] & 1.020 [1.019,1.022] & 0.999987 \\
Two buckets & QMC & -0.045 [-0.045,-0.044] & 1.091 [1.089,1.092] & 0.999736 \\
Two buckets & Quote perturbation & -0.019 [-0.025,-0.007] & 1.095 [1.083,1.121] & 0.999790 \\
\bottomrule
\end{tabular}
\end{table}

The singular-value spectra of the weighted calibration Jacobians measure local parameter identifiability using the same floored normal-vega residuals and transformed parameter coordinates as calibration. \Cref{fig:singular_spectra} averages singular values across QMC replicates and normalizes by the first component, whereas \cref{tab:calibration-uncertainty-summary} reports paired corrected-to-frozen ratios.

\begin{figure}[H]
  \centering
  \includegraphics[width=0.92\textwidth]{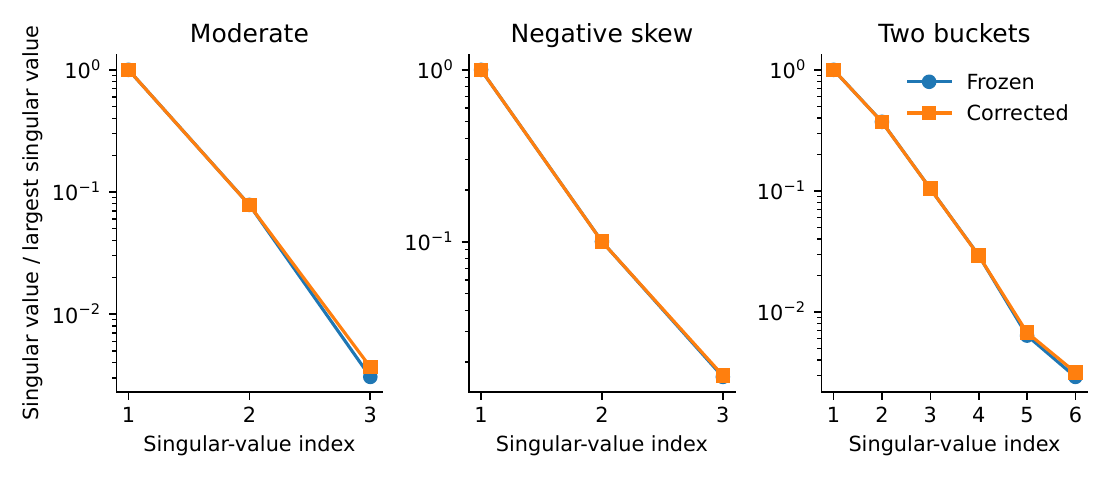}
  \caption{Normalized singular-value spectra of the weighted calibration Jacobians for the QMC replicate experiments. The state-linear correction has substantially less effect on the weakly identified parameter directions than on parameter-recovery bias.}
  \label{fig:singular_spectra}
\end{figure}

The effect of the correction on local calibration identifiability is limited. In the Moderate specification, the smallest singular value increases by \(\ModerateWeakestLiftPercent\)\%, while its associated right-singular vector is essentially unchanged. The corresponding changes are smaller in the Negative skew and Two buckets specifications. More generally, the weakest singular directions are highly stable across the frozen and corrected calibrations, with cosines exceeding \(0.999\). They remain dominated by \(\log\epsilon\) and, in the Two buckets specification, by contrasts between the \(\epsilon\) buckets. The correction therefore improves parameter recovery primarily by reducing approximation-induced bias rather than by materially altering the weakly identified directions of the calibration problem.

\subsubsection{PDE-based calibration in stressed regimes}

For the stress regimes, the finite E1 representation is replaced by the log-volatility PDE/spectral implementation once the E1 approximation ceases to exhibit satisfactory numerical stability. In the State-dependence stress, the PDE-based calibration attains an in-sample RMSE of \(\StateStressPdeFitBp\) normal-volatility bp and a held-out RMSE of \(\StateStressPdeHeldoutBp\) bp. In the Broad-volatility stress, the corresponding errors are \(\BroadStressPdeFitBp\) and \(\BroadStressPdeHeldoutBp\) bp.

In both stress cases, the calibrated solutions are numerically stable, insensitive to initialization, and interior to the admissible parameter domain. The remaining calibration error therefore reflects the first-order model approximation. The maximum quote-level residual nevertheless exceeds the prescribed tolerance, placing both specifications outside the range of quantitative accuracy considered here.

\subsubsection{Decomposition of approximation errors}

To quantify the contributions of the successive approximation layers, we use
the matched, common-random-number objects in
\cref{tab:pricing-decomposition}. The E1 total first-order residual is
\((F-E)-(B-A)\); for the PDE route it is
\((G-G_0)-(B-A)\). These aligned increments remove any common level
discrepancy between the Monte Carlo and analytical implementations from the
measured total correction error. The quadratic diagnostic compares
\(P_{\mathrm{quad}}\) with \(P_{\mathrm{ex}}\) and \(P_{\mathrm{lin}}\).

\begin{table}[H]
  \centering
  \caption{Decomposition of approximation errors across parameter regimes. RMS error components are reported relative to \(L=\lVert B-A\rVert\), the RMS pricing effect of the state-linear loading. The final column reports the maximum absolute quote-level total first-order residual in normal-volatility basis points. \(R_{\gamma=1}\) denotes the first-order truncation remainder at \(\gamma=1\), and \(R_{\mathrm{total}}\) the total residual of the E1 or PDE/spectral first-order correction used for that regime.}
  \label{tab:credible_decomposition}
  \scriptsize
  \resizebox{\textwidth}{!}{\begin{tabular}{llrrrrr}
\toprule
Regime & Representation & $\|C-B\|/L$ & $\|D-C\|/L$ & $\|R_{\gamma=1}\|/L$ & $\|R_{\rm total}\|/L$ & $\max|R_{\rm total}|$ \\
\midrule
Moderate & E1 & 1.83\% & 0.02\% & 0.47\% & 0.51\% & 0.0133 \\
Negative skew & E1 & 1.99\% & 0.37\% & 0.60\% & 0.65\% & 0.0158 \\
Two buckets & E1 & 2.02\% & 0.35\% & 0.56\% & 0.63\% & 0.0152 \\
State-dependence stress & PDE & 3.77\% & 2.09\% & 1.31\% & 1.35\% & 0.0981 \\
Broad-volatility stress & PDE & 4.01\% & 1.96\% & 2.05\% & 2.18\% & 0.0738 \\
\bottomrule
\end{tabular}
}
\end{table}

For the ordinary regimes, all error components remain within the accuracy criteria of \cref{sec:validity_regions}. The two stress regimes also satisfy the relative error criteria, the quadratic-loading diagnostic, and the numerical-refinement and price-admissibility requirements. Their largest quote-level total first-order residuals, however, are \(\StateStressResidualBp\) and \(\BroadStressResidualBp\) normal-volatility bp, respectively, exceeding the tolerance of \(\QuoteResidualThresholdBp\) bp. The first-order approximation is therefore accurate on an RMS-relative basis in these cases but not uniformly across individual quotes.

The order of the first-order truncation error is examined separately by varying the loading scale \(\gamma\). For each regime and instrument, we evaluate the remainder after subtracting the linear term and estimate its slope with respect to \(\gamma\) on logarithmic scales. As reported in \cref{tab:gamma_scaling_results}, the estimated slopes lie between \(\GammaSlopeMinimum\) and \(\GammaSlopeMaximum\), including in the stressed specifications, consistent with the quadratic remainder expected from the first-order expansion.

\begin{table}[H]
  \centering
  \caption{Log--log slopes of the first-order transform remainder as a function of the loading scale \(\gamma\). A slope of two corresponds to the quadratic remainder predicted by the perturbation expansion.}
  \label{tab:gamma_scaling_results}
  \small
  \begin{tabular}{lrrr}
\toprule
Regime & Minimum slope & Maximum slope & Resolved instruments \\
\midrule
Moderate & 1.974 & 1.986 & 3/3 \\
Negative skew & 2.016 & 2.050 & 3/3 \\
State-dependence stress & 2.007 & 2.040 & 3/3 \\
Two buckets & 1.994 & 2.013 & 3/3 \\
Broad-volatility stress & 1.977 & 2.071 & 3/3 \\
\bottomrule
\end{tabular}

\end{table}

\subsection{Approximation accuracy along the interpolation path}
\label{sec:boundary_results}

\Cref{fig:validity_boundary} reports the first-order accuracy diagnostics along the parameter interpolation defined in \cref{sec:robustness_design}. The finite E1 representation remains numerically stable up to the tested value \(u=\BoundaryEoneLastU\), while the log-volatility PDE/spectral representation is used from \(u=\BoundaryPdeFirstU\) onward. The transition between the two numerical representations is therefore localized only to the interval
\[
  u\in[\BoundaryEoneLastU,\BoundaryPdeFirstU].
\]

The maximum quote-level first-order residual is estimated as
\(\BoundaryLowerResidualBp\pm\BoundaryLowerResidualSeBp\) normal-volatility bp at \(u=\BoundaryLowerU\) and
\(\BoundaryUpperResidualBp\pm\BoundaryUpperResidualSeBp\) bp at \(u=\BoundaryUpperU\). The corresponding point estimates lie on opposite sides of the prescribed tolerance \(\QuoteResidualThresholdBp\) bp, while their one-standard-error intervals overlap the threshold. The uncertainty is too large to refine this bracket meaningfully.

\begin{figure}[!t]
  \centering
  \includegraphics[width=0.91\textwidth]{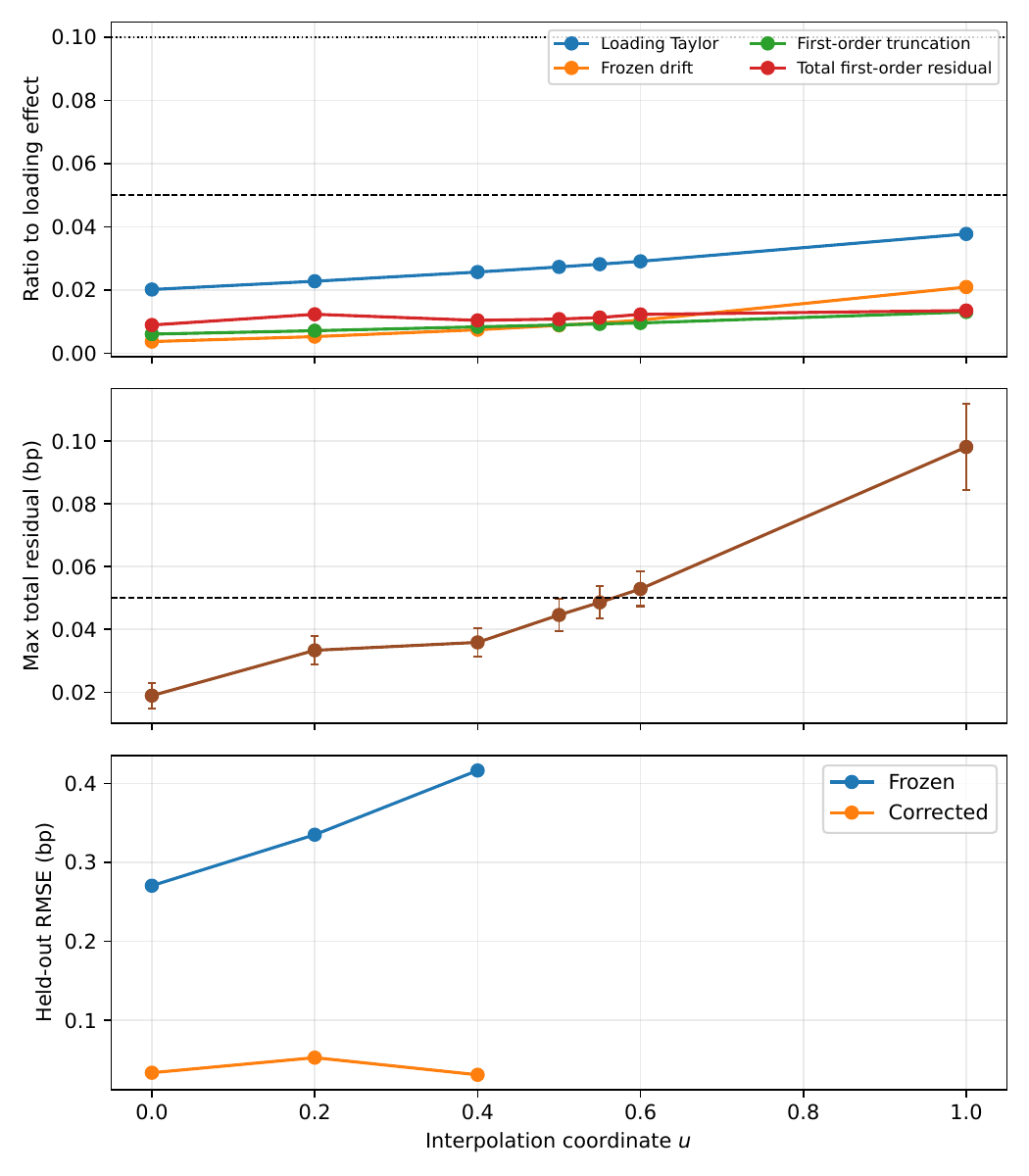}
  \caption{First-order approximation diagnostics along the interpolation from the Negative skew specification to the State-dependence stress. Horizontal lines indicate the accuracy tolerances. The interpolation locates a transition along one parameter path; it does not define a validity region in the full parameter space.}
  \label{fig:validity_boundary}
\end{figure}

\subsection{Sensitivity to the initial term structure}
\label{sec:nonflat_results}

We next examine whether the calibration results for the Moderate specification are sensitive to the assumption of a flat initial yield curve. Replacing the flat \(2.5\%\) curve by the upward-sloping term structure specified in \cref{sec:robustness_design} produces results consistent with those obtained under the baseline curve. The state-linear calibration continues to recover the data-generating stochastic volatility parameters closely, whereas the frozen-loading specification exhibits the same parameter-compensation effect observed in the baseline experiment.

\Cref{tab:nonflat_moderate} summarizes the calibration results. The held-out RMSE decreases from \(\NonflatFrozenHeldoutBp\) to \(\NonflatCorrectedHeldoutBp\) normal-volatility bp, corresponding to an improvement factor of \(\NonflatHoldoutImprovementFactor\). The data-generating instantaneous rate--volatility correlation is \(\NonflatTrueRho\), compared with calibrated values of \(\NonflatFrozenRho\) under the frozen specification and \(\NonflatCorrectedRho\) under the state-linear specification. Both initializations converge to interior solutions. The maximum analytical quote residual is \(\NonflatMaximumResidualBp\) bp, while the maximum standard error of the reference quotes is \(\NonflatMaximumQuoteSeBp\) bp; the numerical accuracy criteria for the E1 representation remain satisfied.

\begin{table}[H]
  \centering
  \caption{Calibration results for the Moderate specification under the upward-sloping initial yield curve. Pricing errors are reported in normal-volatility basis points.}
  \label{tab:nonflat_moderate}
  \small
  \resizebox{\textwidth}{!}{\begin{tabular}{lrrrrrr}
\toprule
Case & $\beta$ & $\epsilon$ & $\nu$ & $\rho$ & Fit RMSE & Holdout RMSE \\
\midrule
Truth & -0.050000 & 0.180000 & 0.186815 & -0.267644 & -- & -- \\
Frozen E1 & -0.020164 & 0.163033 & 0.164275 & -0.122744 & 0.3779 & 0.3174 \\
Corrected E1 & -0.049872 & 0.178252 & 0.185098 & -0.269436 & 0.0137 & 0.0104 \\
\bottomrule
\end{tabular}
}
\end{table}

\begin{minipage}{\textwidth}
For comparison, under the flat initial curve the corresponding held-out RMSEs are \(\ModerateFrozenHeldoutBp\) and \(\ModerateCorrectedHeldoutBp\) bp for the frozen and state-linear specifications, respectively. The reduction in calibration bias and held-out pricing error is therefore not specific to the flat-curve benchmark. The sensitivity of the results across a broader range of initial yield curves is left for further empirical investigation.
\end{minipage}

\begin{figure}[H]
  \centering
  \includegraphics[width=0.78\textwidth]{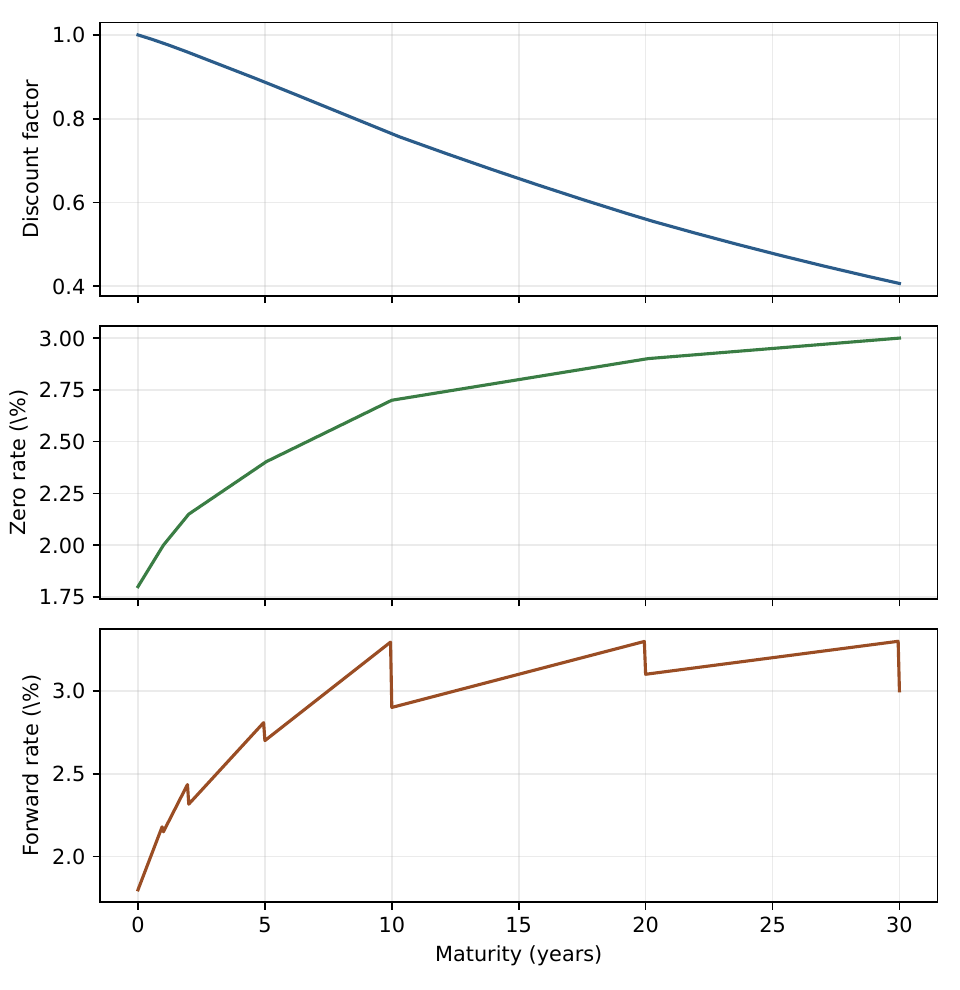}
  \caption{Upward-sloping synthetic initial term structure used in the sensitivity analysis, shown as discount factors, continuously compounded zero rates, and instantaneous forward rates.}
  \label{fig:nonflat_curve}
\end{figure}

\subsection{Exact first-variation reduction and symbolic checks}\label{sec:symbolic_results}

The symbolic calculation gives identically zero residuals for the constant, \(\xi\), and \(\eta\) coefficients in both the direct system \cref{eq:g_system} and the log-conjugated system \cref{eq:p_system}. It also verifies that no \(\xi^2\), \(\xi\eta\), or \(\eta^2\) coefficient is generated at first order. This confirms the invariant affine-state subspace used in \cref{prop:first_variation}.

Several identities provide additional consistency checks on the transform equations. At \(\phi=0\), the baseline transform reduces to one and the first variation vanishes, while along the Fourier contour the numerical solution satisfies the expected complex-conjugacy relation. Setting \(a_x=a_y=0\) recovers the frozen E1 transform to numerical precision.

As a more stringent check of the \(Y\)-dependent terms, we repeat the comparison after multiplying the full \(a_y(t)\) path by 20. At \(\phi=-0.5+15\ii\), the E1--PDE discrepancy remains \(4.44\times10^{-7}\), and the PDE result lies within approximately \(1.12\) scrambled-replicate standard errors of the corresponding pathwise Monte Carlo estimate. This comparison directly probes both the \(q_y\) source term and the \(p_x\)-to-\(p_y\) drift coupling.

The remaining numerical checks cover calendar-time reversal, the multidimensional matrix orientation, the Gaussian and zero-mean-reversion limits, pathwise derivatives, Fourier inversion, and call-price shape.

The additive first-order E1 reconstruction in \cref{eq:mgf_reconstruction} satisfies the numerical mass and option-price admissibility criteria used in the calculations, although Fourier inversion reveals a small negative component in the associated approximate density. This is consistent with its interpretation as a first-order perturbative approximation: the expansion does not guarantee positivity beyond first order. We use the additive reconstruction as the first-order formula throughout without interpreting it as an exact probability density. Exponentiating the correction would impose a different functional form and introduce higher-order terms that the expansion does not determine.

\subsection{Accuracy of the finite E1 correction in ordinary regimes}\label{sec:ordinary_transform_results}

\Cref{tab:transform_errors} compares the degree-six E1 log-transform correction with the exact one-dimensional PDE. The error is reported relative to the magnitude of the PDE correction, not relative to the complete transform.

\begin{table}[H]
  \centering
  \caption{Degree-six E1 correction versus the exact one-dimensional PDE in the Negative skew regime.}
  \label{tab:transform_errors}
  \small
  \begin{tabular}{lrrr}
\toprule
Swaption & Frequency & Absolute error & Relative correction error \\
\midrule
$\mathrm{3Yx7Y}$ & 0 & $7.93e-10$ & 0.233\% \\\\
$\mathrm{3Yx7Y}$ & 5 & $7.56e-08$ & 0.215\% \\\\
$\mathrm{3Yx7Y}$ & 15 & $5.19e-07$ & 0.145\% \\\\
$\mathrm{3Yx7Y}$ & 30 & $5.87e-06$ & 0.312\% \\\\
$\mathrm{3Yx7Y}$ & 60 & $7.22e-05$ & 0.662\% \\\\
\addlinespace
$\mathrm{10Yx10Y}$ & 5 & $1.72e-07$ & 0.151\% \\\\
$\mathrm{10Yx10Y}$ & 15 & $1.70e-06$ & 0.126\% \\\\
$\mathrm{10Yx10Y}$ & 30 & $5.40e-06$ & 0.066\% \\\\
$\mathrm{10Yx10Y}$ & 60 & $4.43e-05$ & 0.088\% \\\\
\bottomrule
\end{tabular}

\end{table}

\begin{figure}[H]
  \centering
  \includegraphics[width=0.78\textwidth]{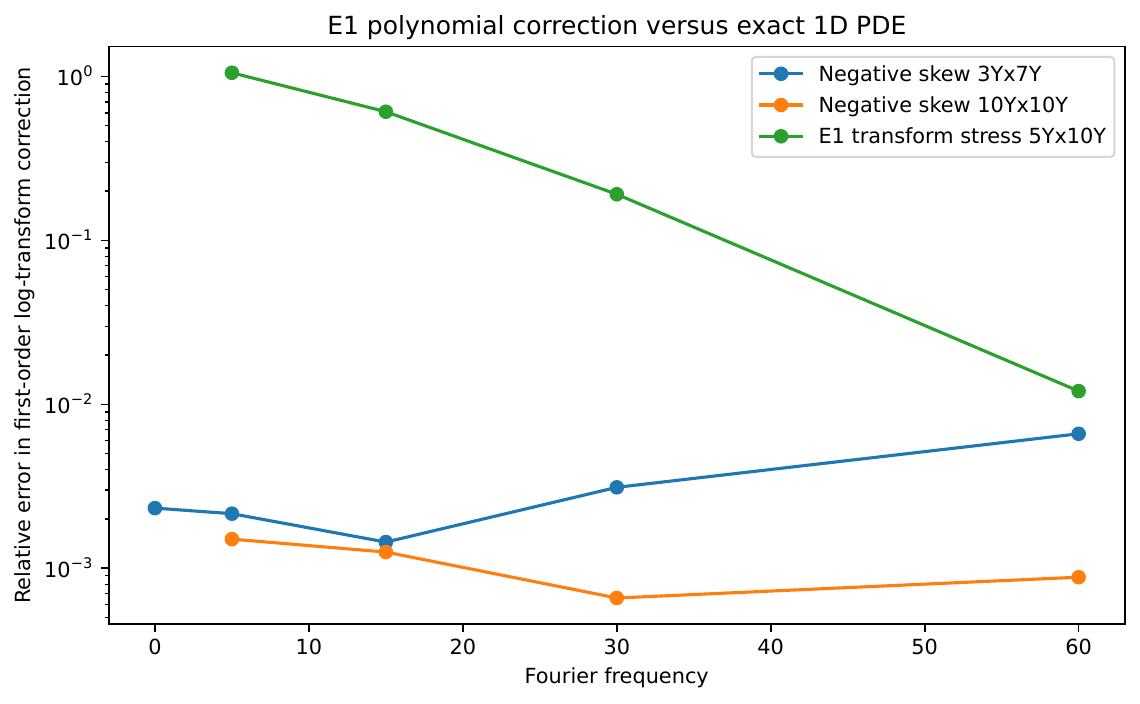}
  \caption{Relative error of the degree-six first-order correction against the exact one-dimensional PDE. The Negative skew cases are the principal validation cases; the E1 transform stress is discussed separately in \cref{sec:stress_results}.}
  \label{fig:transform_relative_error}
\end{figure}

The one-dimensional PDE is also checked by grid and domain refinement. At the \(10\mathrm{Y}\times10\mathrm{Y}\) swaption and Fourier frequency \(60\), the refined-grid solution changes the log-transform derivative by \(4.20\times10^{-7}\), approximately two orders of magnitude less than the corresponding E1 approximation error. The discrepancy between the PDE and pathwise Monte Carlo estimates decreases systematically as the Monte Carlo time grid is refined, consistent with time-discretization error in the simulated state dynamics.
\subsection{Price-level first-variation validation}\label{sec:price_validation_results}

\Cref{tab:price_derivative_results} compares Fourier inversion of the degree-six first-order transform with the exact pathwise option-price derivative in \cref{eq:pathwise_call_derivative}. The errors are annuity-normalized prices. At the at-the-money strike, they correspond to \(\ThreeBySevenAtmEquivalentBp\) and \(\TenByTenAtmEquivalentBp\) normal-volatility basis points for the two cases.

\begin{table}[H]
  \centering
  \caption{First-order payer-swaption price derivative: E1 Fourier inversion versus pathwise quasi-Monte Carlo. The final column is the RMSE of the full \(\gamma=1\) state-linear price minus its first-order approximation.}
  \label{tab:price_derivative_results}
  \small
  \begin{tabular}{lrrrr}
\toprule
Swaption & Derivative RMSE & Maximum error & ATM vol equivalent & $\gamma=1$ remainder RMSE \\
\midrule
$\mathrm{3Yx7Y}$ & $7.80e-07$ & $1.17e-06$ & 0.0028 bp & $2.90e-07$ \\\\
$\mathrm{10Yx10Y}$ & $2.15e-06$ & $3.25e-06$ & 0.0103 bp & $1.57e-06$ \\\\
\bottomrule
\end{tabular}

\end{table}

\begin{figure}[H]
  \centering
  \includegraphics[width=0.78\textwidth]{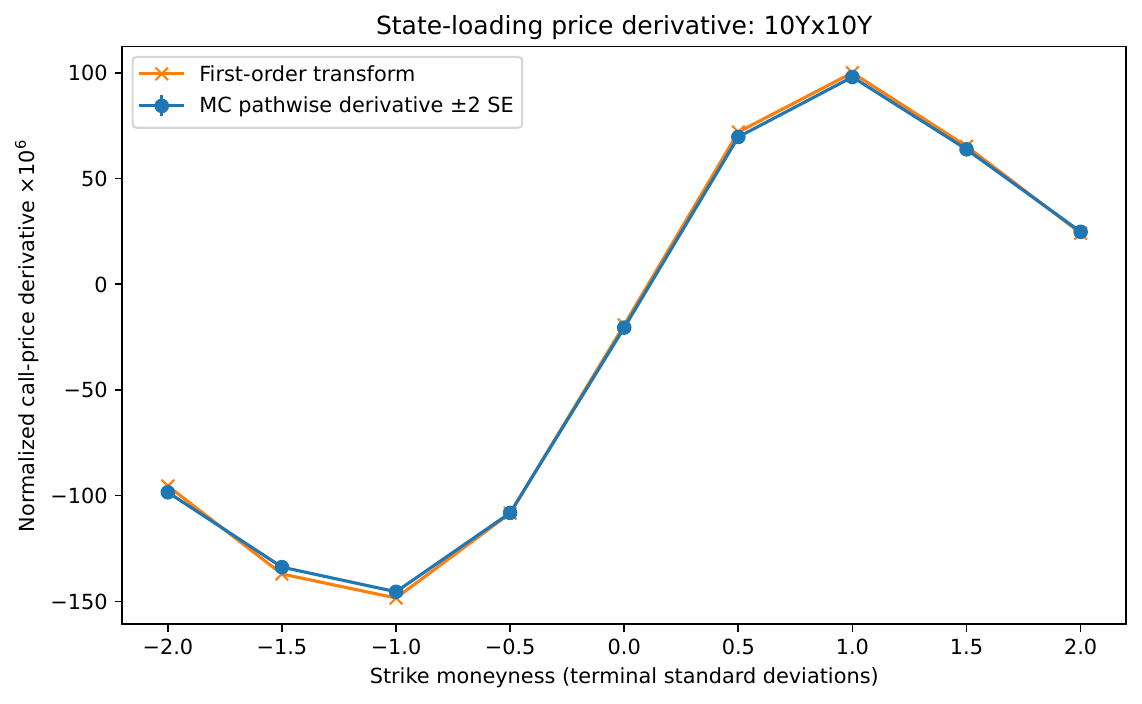}
  \caption{Pathwise and Fourier first-order payer-swaption price derivatives for the \(10\mathrm{Y}\times10\mathrm{Y}\) case. The two estimates are nearly indistinguishable across the nine-strike grid.}
  \label{fig:price_derivative_10y}
\end{figure}

The first-order truncation in the auxiliary parameter can be checked without a transform approximation because \cref{eq:pathwise_affine_gamma} holds pathwise. For a \(3\mathrm{Y}\times7\mathrm{Y}\) swaption at \(\phi=-0.5+15\ii\), define
\begin{equation}
  \mathcal R(\gamma)
  =\left|M(\gamma)-M(0)-\gamma M^{(1)}(0)\right|.
  \label{eq:gamma_remainder}
\end{equation}
Across the tested loading-scale values, \(\mathcal R(\gamma)/\gamma^2\) remains nearly constant. This is the expected second-order scaling of the first-order truncation remainder and verifies the order of the omitted terms.

\begin{figure}[H]
  \centering
  \includegraphics[width=0.72\textwidth]{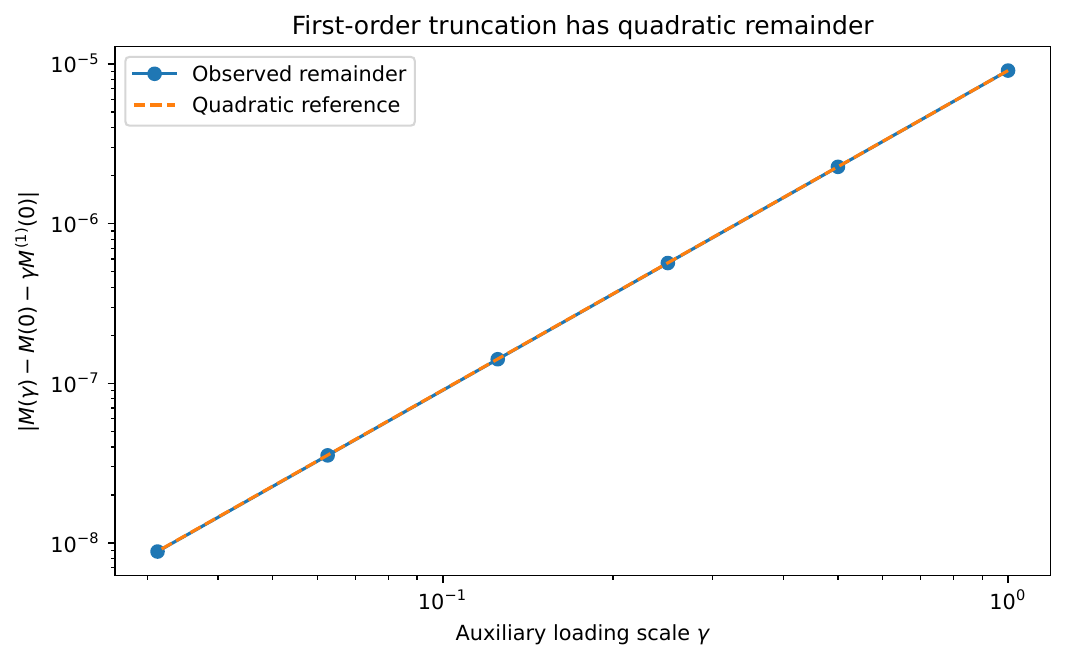}
  \caption{Quadratic scaling of the transform remainder in the auxiliary loading parameter.}
  \label{fig:gamma_scaling}
\end{figure}

\subsection{E1 transform stress and limitations of the finite E1 representation}
\label{sec:stress_results}

The first-variation reduction and its finite E1 implementation involve distinct approximation steps. Conditional on the prescribed centering path and frozen annuity-measure drift, \cref{prop:first_variation} establishes exact affine closure for the first variation within the auxiliary state-linear loading family, with coefficient functions satisfying one-dimensional equations in volatility. This result does not require a polynomial approximation in the volatility state. The full-scale first-order truncation remains separate, and the finite E1 implementation adds a polynomial representation in \(v=\sigma-\theta\). Its numerical accuracy can therefore deteriorate when the volatility process explores a sufficiently wide range around the expansion point \(v=0\).

The E1 transform stress isolates the finite E1 representation error from the underlying one-dimensional first-variation equations. 
For the \(5\mathrm{Y}\times10\mathrm{Y}\) swaption, the degree-six E1 correction differs from the one-dimensional PDE solution by \(105.4\%\), \(61.0\%\), \(19.1\%\), and \(1.2\%\) of the PDE correction at Fourier frequencies \(5\), \(15\), \(30\), and \(60\), respectively. Increasing the polynomial degree does not lead to convergence. At Fourier frequency \(15\),
\begin{equation}
  \begin{array}{c|ccccc}
    N & 3 & 4 & 6 & 8 & 10 \\
    \hline
    |P_N-P_{\mathrm{PDE}}|
      &1.66\times10^{-3}
      &1.31\times10^{-3}
      &5.04\times10^{-3}
      &2.75\times10^{-2}
      &4.21\times10^{-1}
  \end{array}.
  \label{eq:stress_degree_table}
\end{equation}

\begin{figure}[H]
  \centering
  \includegraphics[width=0.76\textwidth]{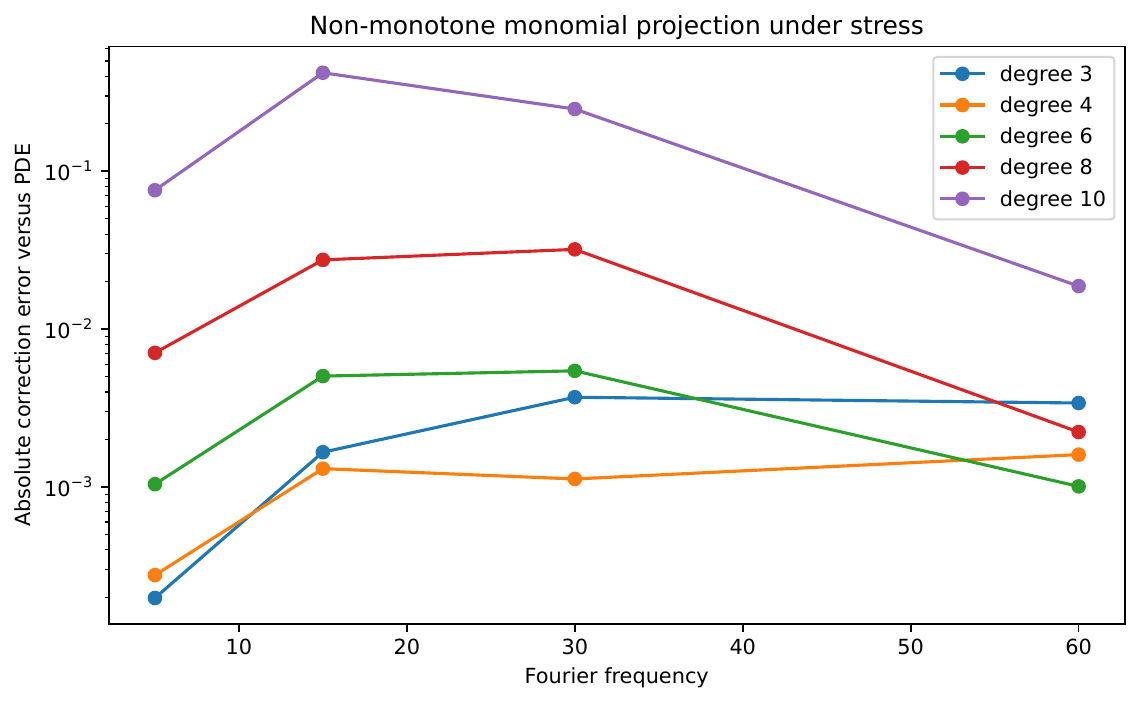}
  \caption{Sensitivity of the finite monomial representation to polynomial degree in the E1 transform stress. The error relative to the one-dimensional PDE is non-monotone and eventually increases rapidly with degree.}
  \label{fig:stress_degree_sensitivity}
\end{figure}

The instability is consistent with the range of volatility states encountered in this specification. The simulated terminal-volatility quantiles at the \(0.1\%\), \(1\%\), \(5\%\), \(50\%\), \(95\%\), \(99\%\), and \(99.9\%\) levels are approximately
\begin{equation}
  0.16,\quad0.23,\quad0.33,\quad0.82,\quad1.97,\quad2.71,\quad3.68.
  \label{eq:stress_sigma_quantiles}
\end{equation}
With \(\theta=1\), this corresponds to a wide range of \(v=\sigma-\theta\), over which a monomial expansion about \(v=0\) is poorly conditioned. Higher powers of \(v\) are consequently no longer uniformly controlled, and increasing the truncation degree need not improve the approximation.

The distinct Broad-volatility stress in \cref{tab:calibration_regimes} shows the same numerical instability at calibration level. Under the finite E1 representation, the held-out RMSE increases from \(\BroadEoneFrozenHeldoutBp\) to \(\BroadEoneCorrectedHeldoutBp\) normal-volatility bp after introducing the first-order correction, and degree and time refinement can change individual prices by as much as \(\BroadEoneRefinementBp\) bp. This demonstrates the failure of the finite monomial representation for this broad-volatility specification.

To evaluate the first-order equations without the monomial approximation, we solve the same one-dimensional system by Chebyshev--Lobatto collocation in \(\ell=\log\sigma\) \citep{Trefethen2000}. The computational domain is chosen from simulated log-volatility quantiles with a fixed extension beyond the sampled range. Convergence is assessed with respect to polynomial degree, domain width, time discretization, Fourier truncation, and quadrature. The final quadrature refinement changes corrected normal-volatility values by at most \(\BroadSpectralFinalQuadratureBp\) bp. The resulting call prices are positive, decreasing in strike, and convex, and the transform and option prices agree with the corresponding Monte Carlo calculations within their sampling uncertainty.

The finite-difference PDE and Chebyshev solutions also agree within the prescribed deterministic numerical tolerance on the broad-volatility calibration panel. The remaining discrepancy with the fully nonlinear model can therefore be attributed to the first-order model approximation rather than to the numerical solution of the one-dimensional equations. Its maximum quote-level total first-order residual is \(\BroadStressResidualBp\) normal-volatility bp, exceeding the tolerance of \(\QuoteResidualThresholdBp\) bp defined in \cref{sec:validity_regions}. Thus, in this stress specification, the affine first-variation reduction within the auxiliary state-linear family remains applicable and its one-dimensional equations can be solved accurately, while the finite E1 representation is numerically unreliable and the first-order approximation at \(\gamma=1\) does not attain the required quote-level accuracy.

\section{Discussion and limitations}\label{sec:discussion}

\subsection{Interpretation of the calibration results}

The calibration results indicate that freezing the swap-rate loading introduces a systematic source of parameter bias. When the frozen specification is calibrated to prices generated by the nonlinear model, the stochastic volatility parameters partially compensate for the omitted state dependence of the swap-rate sensitivity. Restoring this dependence to first order substantially improves the recovery of \(\beta\), \(\epsilon\), \(\nu\), and \(\rho\), while also reducing pricing errors at held-out strikes. These improvements require no additional calibration parameters and remain present in the six-parameter Two buckets specification.

Furthermore, when the fitted parameter vectors are evaluated under the same nonlinear dynamics, the estimates obtained from the corrected calibration reproduce the data-generating swaption surface more closely on both the fitting and held-out panels. The parameter shifts induced by the frozen approximation therefore have consequences for the nonlinear price map itself, supporting the interpretation that the frozen calibration partly compensates for approximation error through the stochastic volatility parameters. The singular-value analysis indicates that the improvement arises primarily from reduced approximation bias rather than from improved local calibration identifiability.
\subsection{Interpretation of the transform results}

Conditional on the prescribed centering path and frozen annuity-measure drift, the first variation within the auxiliary state-linear loading family reduces to one-dimensional equations in volatility. This affine closure does not make the nonlinear loading, the evaluation at \(\gamma=1\), or the centering path exact. Combining the equations with the E1 approximation for quadratic-drift lognormal stochastic volatility gives the finite ODE system. In the ordinary regimes, its approximation error is small relative to the state-loading correction and to the calibration effects of freezing the loading.

The E1 transform stress and the separate Broad-volatility calibration stress distinguish the first-variation reduction from its finite E1 implementation. When the volatility distribution spans a sufficiently wide range around the expansion point, the low-degree monomial representation in \(v=\sigma-\theta\) becomes numerically unreliable and increasing the polynomial degree need not improve it. The same one-dimensional equations can instead be solved directly in log volatility with the PDE/spectral representation. This extends the numerical reach of the first-variation equations beyond E1, although the first-order approximation at full loading scale can itself fail in more extreme parameter regimes.

\subsection{Remaining approximations and possible extensions}

Several approximations from the factor HJM swaption construction remain in the corrected transform.

First, the annuity-measure coefficient \(L_X(t,X,Y)\) is still evaluated along the deterministic expected-state path. Across the five parameter regimes considered here, the common-random-number decomposition attributes between \(0.02\%\) and \(2.09\%\) of the state-loading price effect to this approximation, less than the corresponding loading-linearization error in each case. Linearizing \(L_X\) would introduce additional first-order source terms and is left for future work.

Second, the nonlinear swap-rate loading is approximated only to first order in the centered rate states. This approximation is accurate in the ordinary regimes, but the stress experiments show that the quadratic and higher-order state dependence can become non-negligible. Extending the perturbation to second order would enlarge the invariant state basis to
\begin{equation}
  1,\quad \xi,\quad \eta,\quad \xi^2,\quad \xi\eta,\quad \eta^2,
\end{equation}
with a corresponding increase in the number of one-dimensional volatility equations. Such an extension would be most relevant in parameter regimes for which the measured loading-linearization error is no longer small.

Third, the corrected transform contains only the first-order term in the loading parameter \(\gamma\). In the ordinary-regime experiments, the omitted remainder is small and scales quadratically as predicted by the perturbation expansion. This observation is specific to the parameter regimes considered and is not a uniform error bound over the model parameter space.

Finally, both the E1 and direct log-volatility PDE/spectral implementations solve the first-order approximation rather than the fully nonlinear swaption problem, so we compare them with independently generated nonlinear-model prices. In the ordinary regimes, the remaining discrepancy is small relative to the state-loading correction. In the two stress regimes, the one-dimensional first-variation equations still converge numerically, but the maximum quote-level discrepancy exceeds the tolerance in \cref{sec:validity_regions}. Numerical convergence of these equations can therefore persist after higher-order loading effects have become material.

\subsection{Scope of the empirical study}
The numerical study uses synthetic one-factor specifications to isolate the effect of state dependence in the swap-rate loading. It establishes parameter-recovery and pricing improvements in these controlled settings, not empirical performance on observed swaption markets. Future work should compare the frozen and corrected specifications on historical swaption cubes and test the stability of calibrated parameter term structures across dates and market conditions.

\section{Conclusion}\label{sec:conclusion}

Freezing the swap-rate loading in the factor HJM stochastic volatility construction removes the dependence of conditional swap-rate variance on the current yield-curve state. The synthetic calibration experiments show that this approximation can materially affect the interpretation of the stochastic volatility parameters: when the frozen model is calibrated to prices generated under the nonlinear loading, the fitted volatility parameters partially compensate for the omitted state dependence. Introducing the first-order Taylor correction to the loading substantially reduces this parameter bias and improves pricing at held-out strikes, without adding calibration parameters. Repricing both fitted parameter vectors under the full nonlinear model confirms that the corrected estimates reproduce the nonlinear vanilla price map more accurately in the considered regimes.

For the scalar Cheyette specification with quadratic-drift lognormal stochastic volatility, conditional on the prescribed centering path and frozen annuity-measure drift, the first variation within the auxiliary state-linear loading family admits an exact affine representation in the centered rate states. The associated coefficient functions satisfy one-dimensional parabolic equations in the volatility state. This statement does not remove the loading Taylor remainder or the full-scale \(\gamma=1\) truncation. Combining the equations with the existing E1 approximation yields a finite-dimensional block-linear ODE system that can be used within the same Fourier-pricing framework as the original frozen-loading model.

In the ordinary regimes, the first-order residual is small relative to the state-loading effect, and the correction materially improves parameter recovery. The singular-value analysis attributes this improvement primarily to reduced approximation bias rather than to a substantial change in local identifiability. Under broader volatility dynamics, the finite E1 representation can become unreliable because the monomial expansion in \(v=\sigma-\theta\) is unstable. The one-dimensional first-variation equations can still be solved accurately in log volatility, although the first-order approximation itself may fail at full loading scale.

In the ordinary regimes considered, the state-linear loading materially reduces calibration bias without adding parameters. Its accuracy must nevertheless be checked when the rate or volatility state distribution moves beyond those regimes. Future work should test the correction in multifactor models and on historical swaption surfaces.

\section*{Code availability}

Code and numerical outputs for this paper are available at
\url{https://github.com/quaere-verum/FHJM-SV-Calibration}.

\appendix

\section{Technical details for the first variation}\label{app:first_variation_details}

\subsection{Calendar time and time to expiry}\label{app:phase1_time}

Let \(U_\gamma(t,x)=\E_{t,x}[\Psi(X_T^\gamma)]\), where the time-inhomogeneous
Markov generator in calendar time is \(\cL_{\gamma,t}\).  The backward
Kolmogorov equation and terminal condition are
\begin{equation}
  \partial_t U_\gamma(t,x)+\cL_{\gamma,t}U_\gamma(t,x)=0,
  \qquad U_\gamma(T,x)=\Psi(x).
  \label{eq:calendar_backward}
\end{equation}
Define \(\tau=T-t\), \(\widehat U_\gamma(\tau,x)=U_\gamma(T-\tau,x)\), and,
for every deterministic coefficient path \(c\),
\begin{equation}
  \overleftarrow c(\tau):=c(T-\tau),
  \qquad
  \overleftarrow{\cL}_{\gamma,\tau}:=\cL_{\gamma,T-\tau}.
\end{equation}
The chain rule gives
\begin{equation}
  \partial_\tau\widehat U_\gamma
  =-\partial_tU_\gamma(T-\tau)
  =\cL_{\gamma,T-\tau}\widehat U_\gamma
  =\overleftarrow{\cL}_{\gamma,\tau}\widehat U_\gamma,
  \qquad \widehat U_\gamma(0)=\Psi.
  \label{eq:tau_backward}
\end{equation}
All time-to-expiry equations in the main text use the reversed coefficients
in \cref{eq:tau_backward}.

\subsection{Differentiability and Feynman--Kac representation}
\label{app:phase1_differentiability}

Under the frozen annuity-measure drift the state process is independent of
\(\gamma\).  Let
\begin{equation}
  \Delta S^0_{t,T}:=S_T^0-S_t^0,
  \qquad
  \mathcal I_{t,T}:=\int_t^T\sigma_s h(s,Z_s)^{\top}\dd W_s.
\end{equation}
Then, path by path,
\begin{equation}
  S_T^\gamma-S_t^\gamma=\Delta S^0_{t,T}+\gamma\mathcal I_{t,T}.
  \label{eq:pathwise_affine_rigorous}
\end{equation}

\begin{theorem}[Differentiability of the transform]
\label{thm:transform_differentiability}
Fix a finite horizon and a real interval \(I=[-G,G]\) of auxiliary loading-scale values.
Assume that the frozen state SDE has a nonexplosive strong solution and that
the stochastic integrals defining \(\Delta S^0_{t,T}\) and
\(\mathcal I_{t,T}\) exist. For \(\phi=\ii u\), \(u\in\R\), assume
\(\mathcal I_{t,T}\in L^1\) for one derivative and
\(\mathcal I_{t,T}\in L^2\) for two. Then
\(M_\gamma(\ii u)=\E[e^{-\ii u(\Delta S^0_{t,T}+\gamma\mathcal I_{t,T})}]\)
is respectively once or
twice continuously differentiable on \(I\), with
\begin{align}
  \partial_\gamma M_\gamma(\phi)
  &=-\phi\E[\mathcal I_{t,T}
      e^{-\phi(\Delta S^0_{t,T}+\gamma\mathcal I_{t,T})}],
  \label{eq:first_derivative_rigorous}\\
  \partial_{\gamma\gamma}M_\gamma(\phi)
  &=\phi^2\E[\mathcal I_{t,T}^2
      e^{-\phi(\Delta S^0_{t,T}+\gamma\mathcal I_{t,T})}].
  \label{eq:second_derivative_rigorous}
\end{align}
In particular, we write
\(M^{(1)}(0):=\left.\partial_\gamma M_\gamma\right|_{\gamma=0}\);
no prime is used for both the family index and its evaluation point.
For complex \(\phi=a+\ii u\), the same conclusion holds on the explicit
domain
\begin{equation}
  \mathcal D_{k,G}
  :=\left\{\phi\in\mathbb{C}:
  \E\!\left[|\mathcal I_{t,T}|^k
    e^{|\Re\phi|(|\Delta S^0_{t,T}|+G|\mathcal I_{t,T}|)}\right]<\infty\right\},
  \qquad k=1,2,
  \label{eq:complex_transform_domain}
\end{equation}
with \(k\) chosen for the required derivative.
\end{theorem}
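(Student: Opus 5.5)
The plan is to differentiate under the expectation sign, using the pathwise affine identity \cref{eq:pathwise_affine_rigorous}. Because the frozen state process does not depend on \(\gamma\), the integrand \(e^{-\phi(\Delta S^0_{t,T}+\gamma\mathcal I_{t,T})}\) is, for each \(\omega\), an entire function of \(\gamma\). Its derivatives are explicitly
\[
-\phi\,\mathcal I_{t,T}\,e^{-\phi(\Delta S^0+\gamma\mathcal I)}
\quad\text{and}\quad
\phi^2\,\mathcal I_{t,T}^2\,e^{-\phi(\Delta S^0+\gamma\mathcal I)} .
\]
The whole task therefore reduces to finding a \(\gamma\)-independent integrable dominating function on \(I=[-G,G]\). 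With that in hand, the standard differentiation lemma (mean value theorem plus dominated convergence) gives existence of the derivatives and the formulas \cref{eq:first_derivative_rigorous,eq:second_derivative_rigorous}. A second application of dominated convergence gives continuity of those derivatives in \(\gamma\).

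On the characteristic-function line \(\phi=\ii u\), the modulus of the exponential equals one. The difference quotient \(\bigl(M_{\gamma+h}-M_\gamma\bigr)/h\) is then bounded by \(|u|\,|\mathcal I_{t,T}|\), using \(|e^{\ii x}-e^{\ii y}|\le|x-y|\). Since \(\mathcal I_{t,T}\in L^1\), this bound is integrable and dominated convergence yields the first derivative. Continuity of \(\gamma\mapsto\partial_\gamma M_\gamma\) follows because the integrand is continuous in \(\gamma\) and dominated by the same bound. For the second derivative, the identical argument is applied to \(\gamma\mapsto -\ii u\,\E[\mathcal I e^{-\ii u(\Delta S^0+\gamma\mathcal I)}]\), with dominating function \(u^2\mathcal I^2\in L^1\) (that is, \(\mathcal I\in L^2\)).

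For complex \(\phi=a+\ii u\), I bound the modulus of the integrand uniformly over \(|\gamma|\le G\):
\[
\bigl|e^{-\phi(\Delta S^0+\gamma\mathcal I)}\bigr|
= e^{-a(\Delta S^0+\gamma\mathcal I)}
\le e^{|a|(|\Delta S^0|+G|\mathcal I|)} .
\]
The mean value theorem in \(\gamma\), applied separately to the real and imaginary parts, keeps the intermediate point inside \(I\). Hence the difference quotients are dominated by \(|\phi|\,|\mathcal I|\,e^{|a|(|\Delta S^0|+G|\mathcal I|)}\) for the first derivative and by \(|\phi|^2\mathcal I^2 e^{|a|(\cdots)}\) for the second. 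These are integrable exactly when \(\phi\in\mathcal D_{1,G}\), respectively \(\phi\in\mathcal D_{2,G}\). The \(k=0\) case, which is needed for \(M_\gamma\) itself to be finite, follows because \(\mathcal D_{k,G}\) also controls the event \(\{|\mathcal I|<1\}\): there the factor \(|\mathcal I|^k\) is small, but one must still check separately that \(\E[e^{|a|(|\Delta S^0|+G|\mathcal I|)}]<\infty\). I would make this explicit by noting that
\[
e^{|a|(\cdots)} \le e^{|a||\Delta S^0|}\,e^{|a|G} \quad\text{on } \{|\mathcal I|\le1\},
\]
and would add the requirement \(\E\,e^{|a||\Delta S^0|}<\infty\) if the theorem's hypotheses do not already imply it.

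The main obstacle is this integrability bookkeeping at the endpoints of \(I\), and keeping the mean-value intermediate points inside the domain. The derivative formulas themselves are routine once the domination is established.
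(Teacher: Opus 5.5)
This is essentially the paper's proof: you apply dominated convergence first to the difference quotients and then to the first derivative, using the same dominators $|u\,\mathcal I_{t,T}|$ and $u^2\mathcal I_{t,T}^2$ on the characteristic line and $|\phi|^k|\mathcal I_{t,T}|^k e^{|a|(|\Delta S^0_{t,T}|+G|\mathcal I_{t,T}|)}$, $k=1,2$, for complex $\phi$. Your extra caveat is well founded and goes beyond the paper's proof, which passes over it (the paper has only a standing assumption that the relevant exponential moment exists): membership in $\mathcal D_{k,G}$ does not by itself guarantee that $M_\gamma(\phi)$ is finite (if $a_x=a_y=0$ then $\mathcal I_{t,T}\equiv0$ and $\mathcal D_{k,G}=\mathbb{C}$), nor does $\mathcal D_{2,G}$ control the first-derivative dominator where $|\mathcal I_{t,T}|<1$, and your added condition $\E\,e^{|a||\Delta S^0_{t,T}|}<\infty$ repairs both --- though $\mathcal D_{k,G}$ controls the event $\{|\mathcal I_{t,T}|\ge1\}$, not $\{|\mathcal I_{t,T}|<1\}$ as one of your sentences says.
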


\begin{proof}
For characteristic arguments the absolute values of the first and second
pointwise derivatives are bounded by \(|u\mathcal I_{t,T}|\) and
\(u^2\mathcal I_{t,T}^2\), respectively.
For complex arguments and \(|\gamma|\leq G\), they are bounded by
\(|\phi||\mathcal I_{t,T}|
 e^{|a|(|\Delta S^0_{t,T}|+G|\mathcal I_{t,T}|)}\) and
\(|\phi|^2\mathcal I_{t,T}^2
 e^{|a|(|\Delta S^0_{t,T}|+G|\mathcal I_{t,T}|)}\). The stated assumptions therefore give
integrable dominators.  Dominated convergence, applied first to difference
quotients and then to the first derivative, proves
\cref{eq:first_derivative_rigorous,eq:second_derivative_rigorous} and their
continuity in \(\gamma\).
\end{proof}

These expectations define a Feynman--Kac \emph{mild} solution through the
time-inhomogeneous Markov evolution family; Duhamel's formula gives the mild
inhomogeneous equation for the first variation.  A classical solution follows
under, for example, sufficient coefficient and terminal-data regularity plus
interior hypoelliptic smoothing (a H\"ormander bracket condition) and the
corresponding growth estimates, or from a separately established classical
well-posedness result for the one-dimensional volatility problems.  Uniform
ellipticity is neither assumed nor generally true: the Cheyette auxiliary
\(Y\) state has no direct diffusion.

\subsection{Multidimensional extension}
\label{app:phase1_multidimensional}

Let \(Z_t\in\R^n\), \(W_t\in\R^m\), and let \(B_t\in\R^r\) be independent of
\(W\).  Let \(K_t\in\R^{n\times n}\), \(r_t(v)\in\R^n\),
\(G_t\in\R^{n\times m}\), \(\beta_t\in\R^m\),
\(\epsilon_t\in\R^r\), \(a_0(t)\in\R^m\), and
\(J_t\in\R^{m\times n}\). Consider
\begin{align}
  \dd Z_t&=(K_tZ_t+r_t(v_t))\dd t+\sigma(v_t)G_t\dd W_t,
  \label{eq:multi_Z}\\
  \dd v_t&=b_v(t,v_t)\dd t
    +\sigma(v_t)(\beta_t^\top\dd W_t+\epsilon_t^\top\dd B_t),
  \label{eq:multi_v}\\
  \dd S_t^\gamma&=\sigma(v_t)
    [a_0(t)+\gamma J_tZ_t]^\top\dd W_t.
  \label{eq:multi_S}
\end{align}
Thus \(J_t\) maps rate state into Brownian-loading space. Put
\(\nu_t^2=\|\beta_t\|^2+\|\epsilon_t\|^2\), \(\sigma=\sigma(v)\), and define
the scalar frozen operator
\begin{equation}
  \cA_\phi q
  =\tfrac12\nu_t^2\sigma^2q_{vv}
   +(b_v-\phi\sigma^2a_0^\top\beta_t)q_v
   +\tfrac12\phi^2\sigma^2\|a_0\|^2q.
  \label{eq:multi_A}
\end{equation}

\begin{proposition}[Multidimensional first variation]
\label{prop:multidimensional_first_variation}
Under the assumptions of \cref{thm:transform_differentiability}, and assuming
uniqueness in the relevant mild-solution growth class, conditional on the
prescribed centering path and frozen drift, the first transform variation
within the auxiliary state-linear family has the exact affine form
\begin{equation}
  f_1(t,z,v)=g_0(t,v)+g(t,v)^\top z,
  \qquad g(t,v)\in\mathbb{C}^n.
  \label{eq:multi_affine_form}
\end{equation}
In reversed time its non-conjugated coefficient system is
\begin{align}
  \partial_\tau g
  &=\cA_\phi g+K^\top g
    +\sigma^2J^\top(\phi^2a_0f_0-\phi\beta f_{0,v}),
  \label{eq:multi_direct_g}\\
  \partial_\tau g_0
  &=\cA_\phi g_0+r^\top g
    +\sigma^2(G\beta)^\top g_v
    -\phi\sigma^2(Ga_0)^\top g,
  \label{eq:multi_direct_g0}
\end{align}
with zero initial data.  Here \(\cA_\phi\) acts componentwise on \(g\).

If \(f_0\ne0\), write \(f_0=e^A\), \(\psi=A_v\),
\(f_1=f_0(p_0+p^\top z)\), and
\begin{equation}
  \cD_\phi q
  =\tfrac12\nu^2\sigma^2q_{vv}
   +(b_v+\nu^2\sigma^2\psi-\phi\sigma^2a_0^\top\beta)q_v.
\end{equation}
The equivalent log-conjugated system is
\begin{align}
  \partial_\tau p
  &=\cD_\phi p+K^\top p
    +\sigma^2J^\top(\phi^2a_0-\phi\beta\psi),
  \label{eq:multi_log_p}\\
  \partial_\tau p_0
  &=\cD_\phi p_0
    +(r+\sigma^2G\beta\psi-\phi\sigma^2Ga_0)^\top p
    +\sigma^2(G\beta)^\top p_v.
  \label{eq:multi_log_p0}
\end{align}
\end{proposition}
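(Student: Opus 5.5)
The proof follows the scalar proof of \cref{prop:first_variation}, with matrix notation throughout. There are three steps: a pathwise representation of the first variation, an affine-in-$z$ argument via variation of constants, and identification of the coefficient equations from the generator.

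\textbf{Step 1: pathwise derivative.} Under the frozen drift, the joint process $(Z,v)$ in \cref{eq:multi_Z,eq:multi_v} does not depend on $\gamma$. Hence \cref{eq:pathwise_affine_rigorous} holds with $\mathcal I_{t,T}=\int_t^T\sigma(v_s)(J_sZ_s)^\top\dd W_s$. By \cref{thm:transform_differentiability},
\[
f_1=-\phi\,\E\bigl[\mathcal I_{t,T}e^{-\phi\Delta S^0_{t,T}}\mid Z_t=z,v_t=v\bigr],
\]
either on the characteristic line or on $\mathcal D_{1,G}$.

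\textbf{Step 2: affine dependence on $z$.} Let $\Phi(s,t)$ be the fundamental matrix of $\dot y=K_sy$. Then
\[
Z_s=\Phi(s,t)z+\zeta_{s,t},
\qquad
\zeta_{s,t}=\int_t^s\Phi(s,u)\bigl[r_u(v_u)\dd u+\sigma(v_u)G_u\dd W_u\bigr].
\]
Here $\zeta_{s,t}$ is a functional of $(v,W)$ only. It follows that
\[
\mathcal I_{t,T}=\mathcal I^0_{t,T}+z^\top\!\int_t^T\sigma(v_s)\Phi(s,t)^\top J_s^\top\dd W_s,
\]
and both $\mathcal I^0_{t,T}$ and the stochastic integral are independent of $z$. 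The frozen increment $\Delta S^0_{t,T}=\int_t^T\sigma(v_s)a_0^\top\dd W_s$ depends only on $(v,W)$. The law of $(v,W,B)$ given $(Z_t,v_t)=(z,v)$ depends only on $v$, because $v$ is autonomous. Taking conditional expectations therefore gives \cref{eq:multi_affine_form}, with
\[
g(t,v)=-\phi\,\E\Bigl[\int_t^T\sigma\,\Phi(s,t)^\top J_s^\top\dd W_s\,e^{-\phi\Delta S^0}\Bigm| v_t=v\Bigr].
\]
The moment assumptions make this expectation finite.

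\textbf{Step 3: coefficient equations.} This is the main bookkeeping step. I would write the multidimensional analogues of \cref{eq:full_generator,eq:L1} for \cref{eq:multi_Z,eq:multi_v,eq:multi_S}. The covariances are
\[
\dd\langle Z\rangle=\sigma^2GG^\top\dd t,\quad
\dd\langle Z,v\rangle=\sigma^2G\beta\,\dd t,\quad
\dd\langle S,Z\rangle=\sigma^2Ga_\gamma\,\dd t,\quad
\dd\langle S,v\rangle=\sigma^2a_\gamma^\top\beta\,\dd t,\quad
\dd\langle S\rangle=\sigma^2\|a_\gamma\|^2\dd t.
\]
Conjugation by $e^{-\phi s}$ gives
\[
\cL_{0,\phi}=(Kz+r)^\top\nabla_z+\tfrac12\sigma^2\operatorname{tr}(GG^\top\nabla_z^2)+\sigma^2(G\beta)^\top\nabla_z\partial_v-\phi\sigma^2(Ga_0)^\top\nabla_z+\cA_\phi,
\]
\[
\cL_{1,\phi}f=-\phi\sigma^2(Jz)^\top(G^\top\nabla_zf+\beta f_v)+\phi^2\sigma^2a_0^\top Jz\,f.
\]
Applying $\cL_{0,\phi}$ to $g_0+g^\top z$ kills the Hessian term. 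The remaining terms are $z^\top K^\top g$, $r^\top g$, $\sigma^2(G\beta)^\top g_v$, $-\phi\sigma^2(Ga_0)^\top g$, and $\cA_\phi$ acting componentwise. Since $f_0=f_0(v)$, the source is
\[
\cL_{1,\phi}f_0=z^\top\sigma^2J^\top(\phi^2a_0f_0-\phi\beta f_{0,v}),
\]
which is linear in $z$. Substituting the affine form from Step 2 into the Duhamel mild equation (\cref{eq:f1_duhamel}), or into \cref{eq:f1_pde} when classical regularity holds, and matching the coefficients of $1$ and $z$ yields \cref{eq:multi_direct_g,eq:multi_direct_g0}. Uniqueness in the mild growth class identifies this solution with the pathwise $f_1$.

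\textbf{Step 4: log-conjugated system.} For the log-conjugated system, I would substitute $g=f_0p$ and $g_0=f_0p_0$ and use $\partial_\tau f_0=\cA_\phi f_0$. The identity
\[
\cA_\phi(f_0q)-q\,\cA_\phi f_0=f_0\bigl(\tfrac12\nu^2\sigma^2q_{vv}+\nu^2\sigma^2\psi q_v+(b_v-\phi\sigma^2a_0^\top\beta)q_v\bigr)=f_0\cD_\phi q
\]
handles the principal terms. The term $g_v=f_0(p_v+\psi p)$ contributes $\sigma^2(G\beta\psi)^\top p$. Dividing by $f_0$ then gives \cref{eq:multi_log_p,eq:multi_log_p0}.

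The delicate points are the transpose orientation in the multidimensional setting and the justification of the mild-versus-classical step. For the orientation, $K^\top$ arises from $(Kz)^\top g=z^\top K^\top g$, and the source $J^\top$ arises from $(Jz)^\top w=z^\top J^\top w$.
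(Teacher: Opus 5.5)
Your proposal is correct and follows essentially the same route as the paper. Variation of constants, $Z_s=\Phi(s,t)z+\zeta_{s,t}$, shows that $f_1$ is affine in $z$; the frozen generator's invariance of $\{g_0(v)+g(v)^\top z\}$ together with the source $\cL_{1,\phi}f_0=z^\top\sigma^2J^\top(\phi^2a_0f_0-\phi\beta f_{0,v})$ yields the coefficient equations; and division by $f_0$ via $\cA_\phi(f_0q)-q\,\cA_\phi f_0=f_0\,\cD_\phi q$ gives the log-conjugated system, with your version writing out explicitly the generators and product-rule identity that the paper only summarizes.
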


\begin{proof}
For the flow argument, fix \((v,W,B)\) after the initial time.  Variation of
constants in \cref{eq:multi_Z} gives
\(Z_s=\Phi_{s,t}z+\zeta_{s,t}\), where
\(\partial_s\Phi_{s,t}=K_s\Phi_{s,t}\) and \(\Phi_{t,t}=I_n\).  Hence the
pathwise first loading integral is affine in its initial state:
\(\mathcal I_{t,T}=\mathcal I^0_{t,T}+q_{t,T}^\top z\). The frozen exponential weight is
independent of \(z\), so
\(-\phi\E[\mathcal I_{t,T}e^{-\phi\Delta S^0_{t,T}}\mid Z_t=z,v_t=v]\)
is affine in \(z\).

For the generator argument, the frozen generator maps the vector space
\(\{q_0(v)+q(v)^\top z\}\) into itself.  Its affine drift contributes
\(K^\top q\); the \(Z\)--\(v\) covariance contributes
\(\sigma^2(G\beta)^\top q_v\); and the frozen \(S\)--\(Z\) covariance
contributes \(-\phi\sigma^2(Ga_0)^\top q\).  Differentiating the loading gives
the source
\(z^\top\sigma^2J^\top(\phi^2a_0f_0-\phi\beta f_{0,v})\).
Collection of constant and linear terms proves
\cref{eq:multi_direct_g,eq:multi_direct_g0}; division by \(f_0\) and the
product rule give \cref{eq:multi_log_p,eq:multi_log_p0}.
\end{proof}

For scalar Cheyette take \(n=2\), \(m=1\),
\(Z=(\xi,\eta)^\top\),
\(K=\left(\begin{smallmatrix}-\lambda&1\\0&-2\lambda\end{smallmatrix}\right)\),
\(G=(\alpha,0)^\top\), and \(J=(a_x,a_y)\). The proposition then reduces
exactly to \cref{eq:g_system,eq:p_system}; in particular the \(+g_x\) and
\(+p_x\) couplings are entries of \(K^\top\).

\subsection{Deterministic-volatility Gaussian limit}
\label{app:phase1_gaussian_limit}

As an independent sign and cross-variation check, let
\begin{equation}
  \dd X_t=-\lambda X_t\dd t+c\dd W_t,
  \qquad
  \dd S_t^\gamma=(b+\gamma d\,X_t)\dd W_t,
  \qquad X_0=0,
\end{equation}
with constant deterministic \(b,c,d\). For real $\phi$, under the normalized exponential
tilt generated by \(e^{-\phi bW_T}\),
\(\widetilde W_t=W_t+\phi bt\) is Brownian and
\begin{equation}
  \widetilde\E[X_t]
  =-\phi bc\frac{1-e^{-\lambda t}}{\lambda}.
\end{equation}
Writing the original stochastic integral in terms of \(\widetilde W\) gives
\begin{equation}
  \widetilde\E\!\left[\int_0^T d\,X_t\dd W_t\right]
  =\phi^2dcb^2
    \left[\frac{T}{\lambda}
      -\frac{1-e^{-\lambda T}}{\lambda^2}\right].
\end{equation}
Consequently
\begin{equation}
  \boxed{
  \left.\partial_\gamma\log M_\gamma(\phi)\right|_0
  =-\phi^3dcb^2
    \left[\frac{T}{\lambda}
      -\frac{1-e^{-\lambda T}}{\lambda^2}\right].}
  \label{eq:gaussian_closed_form}
\end{equation}
The resulting identity extends to complex $\phi$ on the common transform
domain by analyticity. The bracket tends continuously to \(T^2/2\) as \(\lambda\to0\).  For
time-dependent deterministic coefficients, let
\(J'(t)=c(t)b(t)-\lambda J(t)\), \(J(0)=0\).  The correction becomes
\begin{equation}
  -\phi^3\int_0^T d(t)b(t)J(t)\dd t.
  \label{eq:gaussian_time_dependent}
\end{equation}
\Cref{eq:gaussian_closed_form} fixes the Fourier sign and vanishes
when the common-noise coefficient \(c\), hence the relevant cross variation,
vanishes. Both formulas agree with numerical solutions of the reference ODE,
including \(\lambda=0\) and asymmetric calendar-time paths.

\section{Baseline E1 equations}\label{app:E1}

For completeness, this appendix records the quadratic log-transform equations used as the baseline finite-dimensional approximation. Let
\begin{equation}
  v=\sigma-\theta,
  \qquad
  b_v=k_0-k_1v-k_2v^2,
  \qquad
  \sigma=\theta+v,
\end{equation}
and consider the frozen-loading transform
\begin{equation}
  f_0(\tau,v;\phi)=\exp\!\left(A_0(\tau)+A_1(\tau)v+A_2(\tau)v^2\right).
\end{equation}
Its log-PDE is
\begin{equation}
  \partial_\tau A
  =b_vA_v
   +\frac12\nu^2\sigma^2\left(A_{vv}+A_v^2\right)
   -\phi\beta a_0\sigma^2A_v
   +\frac12\phi^2a_0^2\sigma^2.
  \label{eq:app_log_pde}
\end{equation}
Substitute \(A=A_0+A_1v+A_2v^2\), expand the right-hand side in powers of \(v\), and truncate at degree two. Writing
\begin{equation}
  \dot A_j=\mathcal F_j(A_1,A_2;t,\phi),
  \qquad A_j(0)=0,
\end{equation}
we obtain the E1 equations. A compact and less error-prone representation is coefficient extraction from
\begin{equation}
  \mathcal P(v)
  =(k_0-k_1v-k_2v^2)(A_1+2A_2v)
  +\frac12\nu^2(\theta+v)^2
     \left[2A_2+(A_1+2A_2v)^2\right]
\end{equation}
\begin{equation}
  \hspace{3.5em}
  -\phi\beta a_0(\theta+v)^2(A_1+2A_2v)
  +\frac12\phi^2a_0^2(\theta+v)^2.
  \label{eq:app_polynomial_rhs}
\end{equation}
Then
\begin{equation}
  \boxed{
  \dot A_0=[v^0]\mathcal P(v),\qquad
  \dot A_1=[v^1]\mathcal P(v),\qquad
  \dot A_2=[v^2]\mathcal P(v),
  }
  \label{eq:app_E1_coeff_extraction}
\end{equation}
where \([v^j]\) denotes coefficient extraction. The discarded \(v^3\) and \(v^4\) coefficients provide a baseline residual diagnostic. This formulation is algebraically equivalent to the E1 equations used in \citet{SeppRakhmonovLogSV2023,SeppRakhmonovFHJM2025} under the transform convention of \cref{eq:transform_definition}.

\section{Pathwise first-variation identities}\label{app:mc_identities}

Under the frozen annuity drift, \((\xi,\eta,\sigma)\) does not depend on the auxiliary parameter \(\gamma\). Hence
\begin{equation}
  \Delta S^\gamma_{0,T}
  =\int_0^T\sigma_t(a_0+\gamma h_t)\dd W_t
  =\Delta S^0_{0,T}+\gamma\mathcal I_{0,T},
  \qquad
  \mathcal I_{0,T}=\int_0^T\sigma_th_t\dd W_t.
\end{equation}
For any complex \(\phi\) for which differentiation and expectation can be interchanged,
\begin{equation}
  \frac{\partial}{\partial\gamma}
  \E[e^{-\phi\Delta S^\gamma_{0,T}}]
  =\E[-\phi\mathcal I_{0,T}
    e^{-\phi(\Delta S^0_{0,T}+\gamma\mathcal I_{0,T})}].
\end{equation}
At \(\gamma=0\), this gives \cref{eq:pathwise_mgf_derivative}. A second differentiation gives
\begin{equation}
  \frac{\partial^2}{\partial\gamma^2}
  \E[e^{-\phi\Delta S^\gamma_{0,T}}]
  =\E[\phi^2\mathcal I_{0,T}^2
    e^{-\phi(\Delta S^0_{0,T}+\gamma\mathcal I_{0,T})}],
\end{equation}
which explains the quadratic remainder in \cref{eq:gamma_remainder} under the required integrability conditions.

For the call payoff \(g(z)=(F_0+z-K)^+\), the pathwise derivative is
\begin{equation}
  \frac{\partial}{\partial\gamma}
  g(\Delta S^0_{0,T}+\gamma\mathcal I_{0,T})
  =\mathcal I_{0,T}\mathbf 1_{\{F_0+\Delta S^0_{0,T}
    +\gamma\mathcal I_{0,T}>K\}}
\end{equation}
almost surely. Provided \(\mathcal I_{0,T}\in L^1\) and the terminal distribution has no atom at the strike, taking expectations at \(\gamma=0\) yields \cref{eq:pathwise_call_derivative}.

\section{Additional parameter-recovery results}
\label{app:additional_calibration}

\Cref{tab:full_parameter_recovery} reports the fitted values of
\(\alpha\), \(\beta\), and \(\epsilon\) for the Moderate and Negative skew
one-bucket specifications. The rate-volatility scale parameter \(\alpha\) is
recovered accurately under both the frozen and state-linear specifications.
The more pronounced effect of the frozen-loading approximation appears in
\(\beta\), and consequently in the implied volatility magnitude
\(\nu=\sqrt{\beta^2+\epsilon^2}\) and instantaneous rate--volatility
correlation \(\rho=\beta/\nu\). Incorporating the state-linear loading
substantially reduces these parameter-recovery errors. Results for the stressed
specifications, for which the log-volatility PDE/spectral representation is
used, are discussed separately in
\cref{sec:coefficient_calibration_results}.

\begin{table}[H]
  \centering
  \caption{Parameter recovery for the Moderate and Negative skew one-bucket
  specifications under the frozen and state-linear calibrations.}
  \label{tab:full_parameter_recovery}
  \small
  \resizebox{\textwidth}{!}{\begin{tabular}{llrrrrrrrr}
\toprule
Regime & Pricer & $\alpha$ & $\beta$ & $\epsilon$ & $\nu$ & $\rho$ & Fit & Holdout \\
\midrule
Moderate & Truth & 0.012000 & -0.050000 & 0.180000 & 0.186815 & -0.267644 & -- & -- \\
Moderate & Frozen E1 & 0.012009 & -0.019950 & 0.162668 & 0.163887 & -0.121730 & 0.3771 & 0.3169 \\
Moderate & Corrected E1 & 0.012002 & -0.049937 & 0.178255 & 0.185117 & -0.269760 & 0.0151 & 0.0118 \\
\addlinespace
Negative skew & Truth & 0.012000 & -0.200000 & 0.300000 & 0.360555 & -0.554700 & -- & -- \\
Negative skew & Frozen E1 & 0.011980 & -0.172068 & 0.294004 & 0.340655 & -0.505109 & 0.3136 & 0.2622 \\
Negative skew & Corrected E1 & 0.012000 & -0.196575 & 0.294946 & 0.354450 & -0.554592 & 0.0381 & 0.0323 \\
\bottomrule
\end{tabular}
}
\end{table}

\bibliographystyle{plainnat}
\bibliography{references}

\end{document}